\documentclass[oneside,envcountsame,envcountsect,a4paper]{llncs}

\usepackage{paralist} %
\usepackage{amsmath}
\usepackage{amssymb}

\usepackage{graphics}
\usepackage{enumerate}
\usepackage[algoruled,vlined,english]{algorithm2e}
\usepackage{gastex} 
\usepackage{xspace} %

\usepackage{bbding}

\def\ra{\rightarrow}

\newcommand{\macro}[2]{ \providecommand{#1}{{\ensuremath{#2}}\xspace}}

\macro{\N}{\ensuremath{\mathbb N}\xspace}
\macro{\Z}{\ensuremath{\mathbb Z}\xspace}
\macro{\Q}{\ensuremath{\mathbb Q}\xspace}
\macro{\R}{\ensuremath{\mathbb R}\xspace}
\macro{\F}{\ensuremath{\mathbb F}\xspace}

\macro{\bB}{\ensuremath{\mathbf B}\xspace}
\macro{\bC}{\ensuremath{\mathbf C}\xspace}
\macro{\bD}{\ensuremath{\mathbf D}\xspace}
\macro{\bG}{\ensuremath{\mathbf G}\xspace}
\macro{\bH}{\ensuremath{\mathbf H}\xspace}
\macro{\bJ}{\ensuremath{\mathbf J}\xspace}
\macro{\bP}{\ensuremath{\mathbf P}\xspace}
\macro{\bT}{\ensuremath{\mathbf T}\xspace}

\macro{\cA}{\ensuremath{\mathcal A}\xspace}
\macro{\cR}{\ensuremath{\mathcal R}\xspace}
\macro{\cS}{\ensuremath{\mathcal S}\xspace}
\macro{\cF}{\ensuremath{\mathcal F}\xspace}
\macro{\cV}{\ensuremath{\mathcal V}\xspace}
\macro{\cT}{\ensuremath{\mathcal T}\xspace}

\macro{\true}{\textsc{True}}
\macro{\false}{\textsc{False}}

\macro{\LTD}{\textsc{Lt}}
\macro{\wLTD}{\textsc{wLt}}

\macro{\myinput}{{\texttt{in}\xspace}}
\macro{\mem}{{\texttt{mem}\xspace}}
\macro{\xlabel}{{\texttt{x}\xspace}}
\macro{\out}{{\texttt{out}\xspace}}
\macro{\lterm}{{\texttt{term}\xspace}}
\macro{\algo}{\ensuremath{\mathcal A}\xspace}

\newcommand{\NNNN}{\mathcal{P}_{\mathrm{fin}}(\N \times L \times \{0,1\}^* \times \N^2)\xspace}
\newcommand{\Dir}[1]{\ensuremath{\operatorname{Dir}(#1)}\xspace}
\newcommand{\dist}{\ensuremath{\operatorname{dist}}\xspace}

\renewcommand{\epsilon}{\varepsilon}

\macro{\card}{\mathrm Card}

\macro{\coset}{^{\mathrm{c}}}

\macro{\bG}{\mathbf G}
\macro{\bH}{\mathbf H}
\macro{\bK}{\mathbf K}
\macro{\graphs}{\mathcal G}           
\macro{\allg}{\mathcal G}             
\macro{\etiq}{L\times B}                      
\macro{\etiqout}{L_{\texttt{out}}}                      
\macro{\lgraph}{\graphs_\etiq}        
\macro{\ldigraph}{\mathcal{D}_\etiq}        
\macro{\trees}{\mathcal T}            
\macro{\gmin}{\graphs_{\mathrm{min}}} 

\macro{\etiq}{L}                      
\macro{\grs}{\mathrel{{\mathcal R}}}  
\macro{\gfam}{\mathcal F}             
\macro{\covgfam}{\widehat{\gfam}}     

\macro{\mk}{\ensuremath{\mathcal M}}               
\macro{\rtrust}{r^{\mathrm{t}}}       

\macro{\grsdxi}{\grs^\dxi}
\macro{\dxi}{{\chi_\covgfam}}
\macro{\carto}{\mathrm{Carto}}

\macro{\mem}{{\texttt{mem}}}       
\macro{\xlabel}{{\texttt{x}}}      
\macro{\out}{{\texttt{out}}}       
\macro{\lterm}{{\texttt{term}}}    
\macro{\tasks}{\mathfrak T}        
\macro{\task}{T}                   

\macro{\covS}{\widehat S}                 
\macro{\grsi}{\grs^{\mbox{\textsc{i}}}} 
\macro{\grso}{\grs^{\mbox{\textsc{o}}}} 
\macro{\grsg}{\grs^{\mbox{\textsc{g}}}} 
\macro{\choosen}{\mbox{\texttt{choice}}}

\macro{\election}{{\small\textsc{Election}}}
\macro{\elect}{{\small\textsc{Elect}}}
\macro{\nonelect}{{\small\textsc{Non-Elect}}}

\macro{\algo}{\mathcal A}

\begin{document}
\title{Leveraging Structural Knowledge for \\Solving Election in Anonymous Networks\\ with Shared Randomness}

\author{J{\'e}r{\'e}mie
Chalopin \inst{1} \and Emmanuel
Godard \inst{1} }

\institute{
\email{\{jeremie.chalopin,emmanuel.godard\}@lis-lab.fr}\\
CNRS \& Universit{\'e} Aix-Marseille\\
}
\pagestyle{plain}

\maketitle
\begin{abstract}
  We study the classical Election problem in anonymous networks, where
  solutions can rely on the use of random bits, which may be either
  shared or unshared among nodes. We provide a complete
  characterization of the conditions under which a randomized Election
  algorithm exists, for arbitrary structural
  knowledge. Our analysis considers both Las Vegas and Monte Carlo
  randomized algorithms, under the assumptions of shared and unshared
  randomness. In our setting, random sources are considered shared if
  the output bits are identical across specific subsets of nodes.

  The algorithms and impossibility proofs are extensions of those of
  \cite{CGMelection} for the deterministic setting. Our results are a complete generalization of those from \cite{FGL}.
Moreover, as applications, we consider many specific knowledge: no
knowledge, a bound on the size, a bound on the number of nodes sharing a source,
the size, or the full topology of the
network. For each of them, we show how the general characterizations
apply, showing they actually correspond to classes of structural knowledge.
We also describe also how randomized Election algorithms from the literature fits in this landscape.
We therefore provide a comprehensive picture illustrating how
knowledge influences the computability of the Election problem in
arbitrary anonymous graphs with shared randomness.

\end{abstract}

\setcounter{footnote}{0}

\section{Introduction}
The Leader Election, or Election, problem is one of
the paradigms of the theory of distributed computing. 
A distributed algorithm solves the
Election problem if in the final
configuration exactly one process is marked as \textsc{Elected} and all
other processes are labeled \textsc{Non-Elected}.    Election
algorithms constitute a building block for many other distributed
algorithms:  the elected vertex acts as coordinator, initiator, and
more generally performs some special role (cf. \cite[p. 262]{Tanenbaum}). 
The election problem was first studied by LeLann \cite{LeLann} who
gives a solution in rings where each process has a unique name. 
Solutions to this problem are studied under two classical assumptions (see 
\cite[Chapter $3$]{Santoro} for details):
each process is identified by a unique name (or identifier): its identity;
  processes have initially the same state (anonymous networks).
If processes have initially unique identifiers, it is always possible
to solve this problem, e.g., by electing the process with the smallest
identifier. Nevertheless, if we consider
\emph{anonymous}/\emph{homonymous} networks where processes do not
have unique identifiers, it is not
always possible to solve the election problem. Angluin \cite{Angluin}
has introduced the classical proof techniques used for showing the
impossibility of an Election algorithm even knowing the full topology
of the network. This technique is based on graph coverings, which is a
notion known from algebraic topology \cite{Massey}.  Finally, several
characterizations of graphs for which there exists an Election
algorithm have been obtained
\cite{BVelection,YKsolvable,Mazur2,CGMelection}.

It is known that randomized algorithms can solve the Election problem
in situation where Angluin-like results prohibit a deterministic
solution. An early example of such result is \cite{ma89}. It is often
said that randomness helps to perform ``symmetry breaking'' for
Election but it is a misconception to say this is always sufficient to
have a solution. In this paper, we present new impossibility results
that are still valid for randomized algorithms without enough structural
knowledge.
We present two general
characterizations of solvability of Election by a Las Vegas or a Monte
Carlo algorithm given any arbitrary knowledge, in the
general case of shared and unshared random sources.
Our contribution is to specify the exact limit of computability.
In a nutshell, the paper explains how randomness helps, or not, to leverage any
structural knowledge about the underlying network in order to actually
break symmetry, and solve the Election problem.  We also discuss how our impossibility
results match known randomized Election algorithms.

\paragraph*{Computing With Knowledge.}
We encode a given arbitrary knowledge as an arbitrary family of graphs
\gfam, which is the set of graphs with the given value of knowledge (e.g. the
same given size). Solving with this knowledge amounts to find an Election algorithm that is correct on any graph from  
\gfam.
From the general characterizations, we consider, in the Application section, these following structural knowledge that are classically
considered.  We denote \allg the set of all graphs. We consider
\begin{compactitem}
\item \emph{no knowledge} that is $\gfam=\allg$;
\item \emph{bound} $S$ on the size, $\gfam=\{\bG\in\allg, |\bG|\leq S\}$;
\item \emph{strict $2-$approximation} $T$ on the size of the network,
  $\gfam=\{\bG\in\allg, \frac{1}{2}T<|\bG|\leq T\}$;
\item \emph{the size} $S$ of the network, $\gfam=\{\bG\in\allg, |\bG| = S\}$;  
\item \emph{the topology} $\bG$ of the network, $\gfam=\{\bG\}$.
\end{compactitem}
We will see in Section~\ref{sec-app} that this is actually
representative of all the possible computability cases.

It is known from Angluin and later results, that so-called
``symmetric``, or more accurately ``non-minimal'' as described later,
graphs do not admit a deterministic algorithm, even
knowing the exact topology of the network \bG (in this case
$\gfam=\{\bG\}$). 
Here we prove that, given enough structural
knowledge, it is possible to elect with a Monte Carlo randomized
algorithm in any anonymous network. In some sense (that will be made
formal later), for randomized algorithms with unshared sources, all networks are
\textit{minimal}.
This
situation is different from the deterministic case where some
anonymous networks do not admit any Election algorithm, even knowing
the entire topology of the graph.
We therefore
provide a comprehensive picture illustrating how randomness influences
the computability of the Election problem in arbitrary anonymous
graphs.  In the general case, when a bound on the size is known, there
is only a Monte Carlo algorithm. When random sources are
symmetry-breaking, e.g. when there is at least one unshared source,
there is a Las Vegas Election algorithm when the size of the network
is known. When the network is already symmetry-breaking
(covering-minimal) without random sources, then the computability
power of deterministic algorithms is the same as randomized algorithms
for any structural knowledge. Those results demonstrate that
randomness, distributed computability-wise, can help leverage
knowledge to have a randomized Election algorithm when there is no
existing deterministic Election algorithm knowing the topology.
Finally, it is also shown that without enough information on the
network, it could be still impossible to Elect even with a randomized
algorithm, if we require the termination to be explicit.  In some
sense, randoms bits, even shared, do provide symmetry-breaking,
however they are not enough to achieve termination detection.
There is a need for complementary knowledge to leverage the randomization.

\paragraph{Proof Techniques and Contributions.}

A general characterization about leveraging knowledge to solve Election in the
deterministic setting has been done in \cite{CGMelection}.  
The main tool is the notion of quasi-coverings. This notion captures the phenomenon which appears in
the family formed by ``all trees and a triangle'' and quoted by Angluin:
{\it {``the existence of a large enough area of one graph that looks
    like another graph''}} (\cite{Angluin} p. 87, l. 13-17). 
Here we consider networks where nodes have access to (possibly shared) i.i.d. random sources.
In this work, to represent the symmetry relevant to randomized algorithms with
shared sources, we consider graphs where nodes are labeled by the
source of randomness it has access to.  So in the shared randomness
setting, two nodes that share the same source have the same label. The
labeling is denoted by $b$, and the set of nodes having the same
label is called a $B-$class.  However, it is not directly possible to
use the results of \cite{CGMelection} considering $B-$labeled graph
since, in general, a node has no (direct) access to its label %
$b$.

Besides showing that the proof methods from \cite{CGMelection} can be
extended (with some technical care) to the new setting of shared sources,
the main interest of this study are the
applications for standard knowledge, as shown in Section~\ref{sec-app}:
\begin{itemize}
\item there exists just 3 classes of knowledge (exact size/topology, some bounds,
  no knowledge) and we have, through Theorems 1 and 2, a precise
  hierarchy of computability
\item known classical randomized Election Algorithms fit nicely into this
  hierarchy, and our work explains how and why. See also the detailed discussion in Section~\ref{sec:appli}.
\item up to our knowledge, the knowledge considered at Section~\ref{sec-boundedSharing} is the first to consider
  networks of unbounded diameter for which randomized election is still
  solvable. This might seem surprising given the known results from
  the literature.
\end{itemize}

\paragraph*{Related Works.}
Leader Election is a long standing problem in distributed computing.
Angluin introduced graph coverings to capture all the symmetries in
anonymous and homonymous distributed systems.
The notion of symmetry relevant for the message
passing model is captured by \emph{symmetric coverings} of graphs, see Section~\ref{sec:symcov}.
This work has been extended by many researchers and several
characterizations of graphs and deterministic models for which there exists an election
algorithm have been obtained
\cite{BVelection,YKsolvable,Mazur2,CGMelection}.

Shared sources of randomness can be used to improve the complexity
of algorithms as recently shown in \cite{sharedLCL}.
Here, for computability, sharing the same source for
all nodes could actually have an adverse effect. 
Regarding both time and message complexity aspects relative to
knowledge, the most recent advances for randomized Election are given
in \cite{complexityElection21}.

\medskip

The closest previous work using arbitrary knowledge is
\cite{CGMelection} in the deterministic setting, and it is \cite{FGL}
in the general setting of shared randomness.
Relatively to \cite{CGMelection}, the novelty is an extension of the deterministic quasi-covering characterization to the randomized setting.
In \cite{FGL}, the
notion of shared sources of randomness is introduced (under the
terminology ``biased'').  Comparing to \cite{FGL}, we consider the
same randomness setting, but for more general graph topologies than
cliques. It is possible to derive here the characterizations given in
\cite{FGL}, since a clique is $B-$covering minimal if and only if the
gcd of the size of its $B-$classes is 1. %
We underline that our results show the impact of
structural knowledge in the context of randomized algorithms. In some
sense, in cliques, the knowledge of the size, hence the topology of
the network, can be locally derived from the degree. There is no
difference between no knowledge or full knowledge in this family of
graphs.

Note also that we do not consider Election with implicit termination
(i.e. eventually stabilizing to only one \textsc{Elected}
node). Considering knowledge, this setting is solved in \cite{MRZelection},
it is possible to have Election with implicit termination without any
knowledge with a Monte Carlo algorithm.

\section{The Model and Main Technical Results}
Our model is the usual asynchronous message passing model
(\cite{Tel,YKsolvable,CGMelection}). A network is represented by a
simple connected graph $G=(V(G),E(G))$ where vertices correspond to
processes and edges to direct communication links. The initial state
of each process is represented by a label $\lambda(v)$ associated to
the corresponding vertex $v \in V(G)$; we denote by $\bG =
(G,\lambda)$ such a labeled graph. When $\lambda(v)\neq\lambda(u)$ for two different
nodes $u$ and $v$, we can consider that $\lambda(v)$ is the \emph{identity} of the node $v$.
When $\lambda(u)=\lambda(v)$ for all nodes $u,v$, then the network is
said to be \emph{anonymous}. In the general case, the network
is said to be \emph{homonymous}.

We assume that each process can distinguish the different edges that
are incident to it, i.e., for each $u \in V(G)$ there exists a
bijection $\delta_u$ between the neighbors of $u$ in $G$ and
$[1,\deg_G(u)]$. We denote by $\delta$ the set of functions
$\{\delta_u\mid u\in V(G)\}$. The numbers associated by each vertex to
its neighbors are called \emph{port-numbers}, $\delta$ is called a
\emph{port-numbering} of $G$.  Each process $v$ has access at each
step to a source of random bits $b(v)$.
We denote by $b(u,t)$ the $t-$th bit drawn by $u$ at source $b(u)$.
In the case of a shared source between $u$ and $v$, the nodes get always the same output as the other, i.e.
$b(u,t) = b(v,t)$ for all $t$. We also denote $b$ the label corresponding to the random source
that is attached to a given vertex.
We denote by $(\bG,\delta,b)$ the
labeled graph $\bG$ with the port-numbering $\delta$ and sources $b$.

Each process $v$ in the network represents an entity that is capable
of performing computation steps, drawing random bits from its (maybe shared) source,
sending messages via some port and 
receiving any message via some port as was sent by the corresponding
neighbors. %
We consider asynchronous systems, i.e., each computation step
 may take an unpredictable (but finite) amount of
time. We consider only reliable systems: no fault can occur
on processes or communication links. We also assume the channels
are FIFO, i.e., for each channel, the messages are delivered in the
order they have been sent. This delivery depends on a scheduling that is chosen by the adversary.

A deterministic algorithm solves the Election problem on the family $\gfam$ if, for any execution starting from a graph in \gfam, a final configuration is reached where exactly one node is in state \textsc{Elected} and all
the other nodes are in state \textsc{Non-Elected}. We assume that the states \textsc{Elected} and \textsc{Non-Elected} are terminal, i.e., once a node enters in this state, it will not leave this state afterwards. 
A randomized algorithm is an algorithm that can use the random
sources $b$ at any step. We consider a random source to be uniform over time, that is the probability to draw a 1 or a 0 is the same (and non-zero) at any given invocation ; and that the sources are independent and identically distributed.
Given a randomized algorithm and a fixed schedule, the local state of a node $u$ at time $s$ is a random
variable $state(u,s)$, that may depend on the previous state and on the incoming
messages. The probability of an execution is associated to the random variable of the global state. It is therefore the probability of the associated set of random draws at the shared sources, since the schedule is fixed.
The following definitions are standard.
A \emph{Las Vegas algorithm} is an algorithm that terminates with
probability $0<p\leq 1$, and whose final configuration is correct (i.e., there is exactly one node in the \textsc{Elected} state, and all other nodes are in the \textsc{Non-Elected} state).
A \emph{Monte Carlo algorithm} is an algorithm that always terminates,
and whose final configuration is correct with probability $0<p\leq 1$.
A randomized algorithm solves a problem against an asynchronous \emph{non-adaptive scheduler} if it is correct
for any fixed schedule. When the adversary can choose the nodes that are scheduled at the next step knowing not only all local states but also the random bits that were drawn, we say that the adversary is adaptive.
Here, the impossibility proofs
are given in the non-adaptive setting, whereas the correctness of
the algorithms are shown for the adaptive setting.

\subsection{Main Results}

A symmetric covering is a graph homomorphism such that there is a
local bijection at each node, see Section~\ref{sec:symcov} for the
formal definition. From now on, we assume all coverings to be
surjective and when applied to labeled graphs, each label is 
preserved.  A quasi-covering is intuitively a partial graph
homomorphism that behaves like a covering on a subpart of the graph,
see Def.~\ref{def-qc} for a formal definition.  These notions are
extended to the $B-$labeled version of graphs $(\bG,b)$. In order to
clearly distinguish from coverings on the underlying graph, we will
denote the extension $B-$coverings and $B-$quasi-coverings.  A graph
is said to be minimal if any covering is actually an isomorphism.  A
graph $G$ with a $B-$labeling $b$ is $B-$minimal if $(G,b)$ is
minimal.  When there is no pair of nodes sharing a source, all nodes
have a different label and $(G,b)$ is $B-$minimal. This is also the
case when there is at least one unshared source (that is used
by only one node).

We consider any recursive
family of graphs $\gfam$
(encoded as symmetric  $B-$labeled digraphs, see later for the details),
that is, there is an algorithm that decides whether a given graph belongs to $\gfam$.
This is a natural assumption to get actual algorithm parameterized by \gfam, 
consequently, the Election
algorithms we give are the most general election algorithm
possible. It is summarized by the following theorems that consider
both shared and unshared sources through the $B-$labeling.
\begin{theorem}\label{electionLV}
  Let ${\gfam}$ be a recursive family of connected symmetric
  $B-$labeled digraphs.  There exists a Las Vegas Election algorithm
  for ${\gfam}$ if and only if every labeled digraphs of ${\gfam}$ is
  $B-$minimal, and there exists a recursive function $\tau:{\mathcal
    F}\ra \N$ such that for every labeled symmetric digraph $(\mathbf
  D,b)$ of ${\gfam}$, there is no quasi-covering of $\mathbf D$ of radius
  greater than $\tau(\mathbf D)$ in ${\gfam}$, except $\mathbf D$
  itself.
\end{theorem}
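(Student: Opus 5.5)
The proof splits into necessity and sufficiency, each extending the deterministic characterization of \cite{CGMelection} to $B$-labeled symmetric digraphs. The unifying idea is to condition on the random bits. Once an infinite sequence of bits is fixed for each source, a randomized execution becomes a deterministic execution on the graph labeled by $(\lambda,b)$. Shared sources give identical bits, so nodes in the same $B$-class start from the same initial data.

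\emph{Necessity of $B$-minimality.} Suppose some $(\mathbf D,b)\in\gfam$ is a proper $B$-covering of some $(\mathbf K,b')$. Consider the synchronous schedule in which all nodes act in lockstep, and fix a draw of the sources of $\mathbf K$. Lift it to $\mathbf D$ by giving each source of $\mathbf D$ the draw of its image; this is consistent because the covering preserves $B$-labels. The lifting lemma for symmetric coverings then shows that every node of a fibre has the same state at every step. Hence whenever the execution terminates, all fibres have size at least $2$, so two nodes are \textsc{Elected}. Therefore the probability of terminating correctly is $0$, contradicting $p>0$.

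\emph{Necessity of $\tau$.} Given a Las Vegas algorithm, define $\tau(\mathbf D)$ as follows. Enumerate finite bit-prefixes and the synchronous execution on $(\mathbf D,b)$, and take the first prefix assignment for which this execution terminates correctly within $r$ rounds. Such a prefix exists with positive probability, hence exists at all, and the search is recursive because it only simulates the algorithm. Set $\tau(\mathbf D)$ to a value larger than $2r$ plus the diameter, following \cite{CGMelection}. Now let $\mathbf D'\in\gfam$ be a $B$-quasi-covering of $\mathbf D$ of radius greater than $\tau(\mathbf D)$, with $\mathbf D'\neq\mathbf D$. Transfer the chosen bit prefix to $\mathbf D'$ through the quasi-covering map, which preserves $B$-classes. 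Because the map is label-preserving and locally bijective on the ball, nodes in the ball behave exactly as their images for $r$ rounds. This yields an execution on $\mathbf D'$ with positive probability in which a node terminates as \textsc{Elected} while the ball contains at least two preimages of the elected node, or otherwise contradicts correctness, exactly as in \cite{CGMelection}. One technical point needs care: if $\mathbf D'$ has more nodes sharing a source than the ball shows, the transferred bits must still be consistent. A $B$-quasi-covering maps each $B$-class into a single class, so we assign the source of $\mathbf D'$ the bits of the image source. Sources of $\mathbf D'$ that appear only outside the ball are left free.

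\emph{Sufficiency.} Run the Mazurkiewicz-style enumeration/naming algorithm used in \cite{CGMelection}, with each node appending random bits to its name progressively. The nodes do not know $b$, but this does not matter: we only need that with probability $1$ the naming stabilizes to a labeling whose quotient equals $(\mathbf D,b)$ up to isomorphism. This is where $B$-minimality is used. Distinct sources eventually produce distinct bit streams almost surely, so the only remaining symmetries are $B$-coverings, which are trivial. Each node maintains its current reconstructed graph $\mathbf H$. It terminates when $\mathbf H\in\gfam$, which is decidable because $\gfam$ is recursive, and when its local view is stable up to radius $\tau(\mathbf H)$. It then elects the node with the maximal name. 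Correctness follows from the quasi-covering argument of \cite{CGMelection}: if the stopping test passes, the actual network is a quasi-covering of $\mathbf H$ of radius above $\tau(\mathbf H)$, so it is $\mathbf H$ itself, and the election is correct.

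The main obstacle is this stabilization step. It requires showing that the randomized Mazurkiewicz algorithm, run against an adaptive adversary, almost surely reaches a final labeling whose quotient is $B$-minimal. Unbounded random refinement of names handles this: each conflict between two nodes with different sources is broken with probability $1/2$ at each round.
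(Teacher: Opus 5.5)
Your overall architecture matches the paper's: the lifting lemma on a synchronous schedule gives $B$-minimality, quasi-lifting of a terminating synchronous run gives $\tau$, and the randomized Mazurkiewicz enumeration combined with an SSP-style counter gives sufficiency. The first and last parts are fine, and making $\tau$ recursive by enumerating bit prefixes is a useful precision.

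The gap is in the necessity of $\tau$. After quasi-lifting $r$ rounds from a quasi-covering of radius $R$, only the ball of radius $R-r$ is guaranteed to mimic $\mathbf D$. You need \emph{two} preimages of the elected vertex inside that ball. A residual radius above $r+\mathrm{diam}(\mathbf D)$ only guarantees that the ball maps \emph{onto} $\mathbf D$, not that every vertex has two preimages. Here is a concrete counterexample:
\begin{itemize}
\item Let $\mathbf D$ be a ring $C_n$ with one vertex labeled $L$ and the others labeled $x$.
\item Let $\mathbf D'$ be a ring $C_m$, $m\ge 2n$, with a single $L$; sources are copied periodically inside the ball and fresh outside.
\item Both are $B$-minimal, and $\mathbf D'$ is a quasi-covering of $\mathbf D$ with center its $L$-vertex and radius $n-1$.
\item The algorithm ``$L$ elects itself, every $x$ becomes non-elected'' is a Las Vegas algorithm for $\{\mathbf D,\mathbf D'\}$ that terminates in $r=1$ round.
\end{itemize}
For large $n$ we have $n-1>2r+\lfloor n/2\rfloor+1$, so your $\tau$ does not have the claimed property. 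Nor does any contradiction arise: the center is the only preimage of the elected vertex in the ball. Your phrase ``or otherwise contradicts correctness'' cannot cover this.

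What is missing is the sheet-counting step, in two parts.
\begin{itemize}
\item A quasi-covering $\mathbf D'\in\gfam$ distinct from $\mathbf D$ cannot be a covering, because $\mathbf D'$ is itself $B$-minimal, so it is proper. You never use the minimality of $\mathbf D'$ in this direction.
\item By Lemma~\ref{techlemma}, a proper quasi-covering of radius at least $q|V(\mathbf D)|$ has at least $q$ sheets. The proof lifts a spanning tree of $\mathbf D$ and uses Reidemeister's theorem to obtain $q$ disjoint lifts whose union is connected.
\end{itemize}
Hence the residual radius after the $r$ rounds must be at least $2|V(\mathbf D)|$, and you should set $\tau(\mathbf D)=r+2|V(\mathbf D)|$ as in Proposition~\ref{NC-LV}. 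The bound must grow with $|V(\mathbf D)|$, not merely with the diameter. With this choice, your transfer of the bit prefix to $\mathbf D'$ goes through and yields two \textsc{Elected} nodes with positive probability, which is the desired contradiction.
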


Going for Monte Carlo algorithms enables some incorrect runs so it is
possible to consider also non $B-$minimal graphs, and it is needed to
consider only proper quasi-coverings, that is quasi-coverings that are
not actually coverings.

\begin{theorem}\label{electionMC}
  Let ${\gfam}$ be a recursive family of connected symmetric
  $B-$labeled digraphs.  There exists a Monte Carlo Election algorithm
  for ${\gfam}$ if and only if there exists a recursive function
  $\tau:{\mathcal F}\ra \N$ such that for every labeled symmetric
  digraph $\mathbf D$ of ${\gfam}$, there is no proper quasi-covering
  of $\mathbf D$ of radius greater than $\tau(\mathbf D)$ in
  ${\gfam}$, except $\mathbf D$ itself.
\end{theorem}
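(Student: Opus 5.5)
We prove the two directions of Theorem~\ref{electionMC} separately. Both adapt the quasi-covering arguments of \cite{CGMelection} to $B-$labeled digraphs. The new ingredient is that correctness is only required with positive probability, and that a node never sees its own label $b$: it only sees the bits its source produces.

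\textbf{Necessity.} Assume a Monte Carlo algorithm $\algo$ exists. We define $\tau(\mathbf D)$ by running $\algo$ on $\mathbf D$ under a fixed synchronous schedule. We consider the finitely many random draw sequences of bounded length, using the fact that $\algo$ always terminates together with K\"onig's lemma on the binary tree of source outputs. This yields a bound $T(\mathbf D)$ on the number of rounds over \emph{all} random outcomes, and we set $\tau(\mathbf D)=2T(\mathbf D)+1$; this function is recursive.

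Suppose now that $\mathbf K\in\gfam$ is a proper quasi-covering of $\mathbf D$ of radius $r>\tau(\mathbf D)$ and $\mathbf K\neq\mathbf D$. The classical lifting lemma for quasi-coverings, extended to $B-$labels, says that nodes in the core ball of $\mathbf K$ simulate their images in $\mathbf D$ for $T(\mathbf D)$ rounds, provided corresponding sources output the same bits. This coupling is legitimate because the quasi-covering preserves $B-$labels, so equal labels in $\mathbf D$ pull back to shared sources in $\mathbf K$.

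Fix a run on $\mathbf D$, of positive probability, in which exactly one node $v$ is elected. In $\mathbf K$, every preimage of $v$ inside the ball elects with positive probability. Properness, combined with the large radius, gives at least two preimages of $v$ in the ball. To get them, we use the standard counting argument of \cite{CGMelection}: a quasi-covering of radius larger than the size bound in which every vertex of $\mathbf D$ has a unique preimage in the ball would be an isomorphism, hence not proper. Every run of this positive-probability event therefore elects at least two nodes in $\mathbf K$. I expect this to be the delicate point. On $\mathbf K$, the algorithm must still be correct with some positive probability, and the argument must rule out every run rather than a single event. We handle this by taking $\tau$ large enough that \emph{every} terminating run on $\mathbf D$ is simulated, using the uniform termination bound. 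Then every run of $\algo$ on $\mathbf K$, restricted to the ball, mimics a run on $\mathbf D$. In each such run, either $\mathbf D$'s run is incorrect, or the elected node has $\ge 2$ preimages in the ball. It remains to show that for runs on $\mathbf K$ that mimic runs incorrect on $\mathbf D$, the ball already contains either $0$ or $\ge2$ elected nodes, or a non-elected node outside the ball's covering image. This is also where we need "proper": if $\mathbf K$ were a genuine covering of $\mathbf D$, $\mathbf K$ could still succeed with positive probability via asymmetric draws, which is exactly why Monte Carlo tolerates non-$B-$minimal graphs.

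\textbf{Sufficiency.} Given $\tau$, we run the Mazurkiewicz-style enumeration algorithm of \cite{CGMelection}, with each node additionally drawing random bits into its number/name at each renaming step. With probability $1$ in the limit, these bits distinguish nodes with different sources, so the stable view becomes the quotient $(\mathbf D,b)$ up to its minimal $B-$base. Each node runs the termination-detection rule: wait until the reconstructed graph $\mathbf H\in\gfam$ is certified on a ball of radius $\tau(\mathbf H)$. To make the algorithm always terminate, nodes truncate after a fixed number of random rounds determined by $\tau$. With positive probability, the bits drawn are all distinct on distinct sources and the detected graph is $\mathbf D$ itself, or a covering of it. By the hypothesis, no proper quasi-covering is confused with $\mathbf D$. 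When the image is a genuine covering, the distinct random draws still give a unique maximal name, which is elected. Termination always holds because of the truncation, and correctness holds with probability $p>0$.
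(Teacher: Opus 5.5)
\textbf{Necessity.} Your argument breaks exactly at the step you flag, and it cannot be repaired along your route. You cannot show that every run on $\mathbf K$ fails. A run on $\mathbf D$ that fails because nobody is elected lifts to a run in which nobody in the ball of $\mathbf K$ is elected. The vertices of $\mathbf K$ outside the ball are unconstrained, however, and may elect exactly one vertex. All you obtain is $\Pr[\mbox{success on }\mathbf K]\le 1-\Pr[\mbox{success on }\mathbf D]$, which is no contradiction once the guaranteed success probability is at most $1/2$.

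Consequently $\tau(\mathbf D)=2T(\mathbf D)+1$ is wrong, and so is any $\tau$ depending only on $T(\mathbf D)$ and $|V(\mathbf D)|$. (Your choice also lacks the $2|V(\mathbf D)|$ that Lemma~\ref{techlemma} needs.) For a counterexample, let every vertex draw one bit and stop, \textsc{Elected} on $1$ and \textsc{Non-Elected} on $0$. This algorithm has constant $T$. Take a family $\{\mathbf D,\mathbf K\}$ in which each graph has a vertex with a private source and $\mathbf K$ is a proper quasi-covering of $\mathbf D$ of arbitrarily large radius. On this family the algorithm is Monte Carlo with a uniform positive success probability.

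\textbf{The missing idea.} What you lack is the amplification used in the proof of Proposition~\ref{NC-MC}; this is where the success level $\varepsilon$ enters $\tau$. Fix one successful synchronous run of $T$ rounds on $\mathbf D$. On a proper quasi-covering of radius $T+2|V(\mathbf D)|$, its lift occurs with some probability $p_1>0$, and by Lemma~\ref{techlemma} it then yields at least two elected vertices. A proper quasi-covering of radius $N(T+2|V(\mathbf D)|)$ contains $N$ disjoint such regions. Choose $N$ with $(1-p_1)^N<\varepsilon/2$. Then some region double-elects with probability above $1-\varepsilon/2$, contradicting success probability $\varepsilon$. Hence $\tau(\mathbf D)=N(T+2|V(\mathbf D)|)$ works.

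This product bound requires the $N$ regions to draw independent bits. Under your coupling, every preimage reads the very source of its image, so all regions replay the same bits and nothing is amplified.

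\textbf{Sufficiency.} Your plan is essentially the paper's $\mathcal M_\tau$, but your sentence on genuine coverings is false. The vertices of one fiber of a $B$-covering carry the same $B$-label, hence share a source and draw identical bits. Under the synchronous schedule they therefore stay in identical states, so the maximal number is held by at least two vertices rather than one.
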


\section{Symmetric Coverings  and the Election 
Problem for a Labeled Graph} 

\subsection{Preliminaries}\label{sec-prel}
In the following, we will consider directed graphs (digraphs) with
multiple arcs and self-loops. A \emph{digraph} $D=(V(D),A(D),s_D,t_D)$
is defined by a set $V(D)$ of vertices, a set $A(D)$ of arcs and by
two maps $s_D$ and $t_D$ that assign to each arc two elements of
$V(D)$: a source and a target (in general, the subscripts will be
omitted). If $a$ is an arc, the arc $a$ is said to be going out of
$s(a)$ and coming into $t(a)$; we also say that $s(a)$ and $t(a)$ are
incident to $a$. Let $a$ be an arc, if $s(a)= u$ and $t(a)=v$ then 
$v$ is an  out-neighbor of $u$ and $u$ is an in-neighbor of  $v.$
A \emph{symmetric} digraph $D$ is a digraph endowed with a symmetry,
that is, an involution $Sym: A(D) \rightarrow A(D)$ such that for
every $a \in A(D), s(a)=t(Sym(a))$. In a symmetric digraph $D$, the
degree of a vertex $v$ is $\deg_D(v) = |\{a \mid s(a)=v\}| = |\{a \mid
t(a) = v\}|$ and we denote by $N_D(v)$ the set of neighbors of $v$
which is equal to the set of  out-neighbors of $v$ and to the set of 
in-neighbors of $v.$ 

Given two vertices $u, v \in V(D)$, a \emph{path} $\pi$ of {\emph length} $p$
from $u$ to $v$ in $D$ is a sequence of arcs $a_1, a_2, \dots a_p$
such that $s(a_1) = u, \forall i \in [1,p-1], t(a_i)=s(a_{i+1})$ and
$t(a_p)=v$.  If for each $i \in [1,p-1]$, $a_{i+1}\neq Sym(a_i)$,
$\pi$ is \emph{non-stuttering}.  A digraph $D$ is \emph{strongly
  connected} if for all vertices $u, v \in V(D)$, there exists a path
from $u$ to $v$ in $D$.  In a symmetric digraph $D$, the
\emph{distance} between two vertices $u$ and $v$, denoted
$\dist_D(u,v)$ is the length of the shortest path from $u$ to $v$ in
$D$.
In a symmetric digraph $\bD$, we denote by $\bB_\bD(v_0,r)$, the
labeled ball of center $v_0 \in V(D)$ and of radius $r$ that contains
all vertices at distance at most $r$ of $v_0$ and all arcs whose
source or target is at distance at most $r-1$ of $v_0$. 

A \emph{homomorphism} $\gamma$ between the digraph $D$ and the digraph
$D'$ is a mapping $\gamma \colon V(D)\cup A(D) \rightarrow V(D')\cup
A(D')$ such that for each arc $a \in A(D)$, $\gamma(s(a)) =
s(\gamma(a))$ and $\gamma(t(a)) = t(\gamma(a))$.  A homomorphism
$\gamma : D \rightarrow D'$ is an \emph{isomorphism} if $\gamma$ is
bijective, in this case, we note $D \simeq D'$.
Throughout the paper we will consider digraphs where the vertices and
the arcs are labeled with labels from a label set $L\times B$ where
$L$ is a recursive set (representing any additional information
like pseudonyms) and $B$ denotes the set of random sources.
Given a set of labels $L$, a digraph $D$
labeled over $ L$ will be denoted by $(D,\lambda)$, where
$\lambda \colon V(D)\cup A(D) \rightarrow  L$ is the labeling
function.  A mapping $\gamma\colon V(D)\cup A(D) \rightarrow V(D')\cup
A(D')$ is a homomorphism from $(D,\lambda)$ to $(D',\lambda')$ if
$\gamma$ is a digraph homomorphism from $D$ to $D'$ which preserves
the labeling, i.e., such that $\lambda'(\gamma(x))=\lambda(x)$ for
every $x\in V(D)\cup A(D)$.  Labeled digraphs will be designated by
bold letters like $\bD, \bD',\ldots$

With the set of labels $L = \N\times\N$,
we denote by $\ldigraph$ the set of all
symmetric digraphs $\bD = (D,\lambda)$ where for each $a \in A(D)$,
there exist $p, q \in \N$ such that $\lambda(a) = (p,q)$ and
$\lambda(Sym(a)) = (q,p)$ and for each $v \in V(D)$, $\lambda(v) \in
L$ and $\{p \mid \exists a, \lambda(a) = (p,q) \mbox{ and } s(a) = v\}
= [1,\deg_D(v)]$. In other words, $\ldigraph$ is the set of digraphs
that locally look like some digraphs obtained from a simple labeled 
graph $\bG$
with port-numbering labels from $\N$.

To get from graphs to symmetric digraph, we follow the presentation of \cite{CGMelection}.
Let $(G,\lambda)$ be a labeled graph with the port-numbering $\delta.$
We will denote by $(\Dir{\bG},\delta)$ the symmetric
labeled digraph $(\Dir{G},(\lambda,\delta))$ constructed in the
following way.  The vertices of $\Dir{G}$ are the vertices of $G$ and
they have the same labels in $\bG$ and in $\Dir{\bG}$. Each edge
$\{u,v\}$ of $G$ is replaced in $(\Dir{\bG},\delta)$ by two arcs
$a_{(u,v)}, a_{(v,u)} \in A(\Dir{G})$ such that $s(a_{(u,v)}) =
t(a_{(v,u)}) = u$, $t(a_{(u,v)}) = s(a_{(v,u)}) = v$, 
$\delta(a_{(u,v)}) = (\delta_u(v),\delta_v(u))$ and
$\delta(a_{(v,u)}) = (\delta_v(u),\delta_u(v))$. Note that this
digraph does not contain multiple arcs or loops.
The object we use for our study is $(\Dir{G},(\lambda,\delta,b))$ 
and results are stated in full generality for symmetric labeled digraphs.

In the course of an execution, a node is said to be \emph{activable}
if its has pending operation or pending incoming messages. A scheduler
is a family of iterated choices of a non-empty subset of activable
nodes at each step.  A distributed algorithm is probabilistic when
nodes have local access (as formalized above) to a bounded number of
\textit{random bits} at each step. For simplicity, we assume here
there is access to exactly one random bit at each invocation.  
We consider both the asynchronous setting (at each step, an adversary chooses a
subset of activable nodes) and the synchronous setting (at each step, all node are activated and their messages are delivered).  
\subsection{Symmetric Coverings}
\label{sec:symcov}
This section presents a first tool:
symmetric coverings, then it recalls the characterization of labeled
graphs which admit a Las Vegas Election for some knowledge and it
presents the Election algorithm and its main properties.
The algorithm here is an adaptation of \cite{CGMelection}, additional proofs are in the Appendix.
The fundamental notion of symmetric coverings are 
presented in \cite{BVfibrations}. 

A labeled digraph $\bD$ is a \emph{covering} of a labeled digraph $\bD'$
via $\varphi$ if $\varphi$ is a homomorphism from $\bD$ to $\bD'$ such
that each arc $a' \in A(D')$ and for each vertex $v \in
\varphi^{-1}(t(a'))$ (resp. $v \in \varphi^{-1}(s(a'))$, there exists
a unique arc $a \in A(D)$ such that $t(a)=v$ (resp. $s(a) = v$) and
$\varphi(a)=a'$.
A symmetric labeled digraph $\bD$ is a \emph{symmetric covering} of a
symmetric labeled digraph $\bD'$ via $\varphi$ if $\bD$ is a covering
of $\bD'$ via $\varphi$ and if for each arc $a \in A(D)$,
$\varphi(Sym(a)) = Sym(\varphi(a))$. The homomorphism $\varphi$ is a
\emph{symmetric covering projection} from $\bD$ to $\bD'$.
A symmetric labeled digraph $\bD$ is \emph{symmetric-covering
minimal}, or minimal, if there does not exist any symmetric
labeled digraph $\bD'$ not isomorphic to $\bD$ such that $\bD$ is
a symmetric covering of $\bD'$.
To distinguish with $B-$labeled digraphs, we will say that a 
 digraph $(\bD,b)$ is \emph{$B-$symmetric covering
minimal}, or $B-$minimal, when $(\bD,b)$ is minimal. 
Note that when the sources are unshared, each $B-$label occurs only
once, therefore, all networks endowed with at least one unshared source are
$B-$minimal.
We use the following notation, given a set of sources $S$ (or equivalently a set of $B-$labels) and $t\in\N$,
we denote by $X_S$ the random variable corresponding to draws from each sources of $S$. 

\medskip

The following lemma shows the importance of symmetric coverings when
we deal with anonymous networks. This is the counterpart of the
lifting lemma that Angluin gives for coverings of simple graphs
\cite{Angluin} and the proof can be found in
\cite{BVelection,CMasynj}. Here we adapt it to random sources: it is
possible to take the execution steps of the lower graph and carry them
over (aka lifting) to the corresponding nodes, via the covering; in
such a way that the randomized execution still produces on the upper graph $\bD$
a symmetric covering of $\bD'$ with the same probability.

\begin{lemma}[Probabilistic Lifting Lemma]\label{lem-lift}
Let $\bD$ and  $\bD'$ be two $B-$labeled symmetric digraphs of
$\ldigraph.$
If $\bD$ is a symmetric covering of $\bD'$ via $\varphi$, then any
execution of an algorithm $\cA$ on $\bD'$ with probability $p>0$ can be lifted up to an
execution on $\bD$ with probability $p$, such that at the end of the execution, for any $v
\in V(D)$, $v$ is in the same state as $\varphi(v)$.
\end{lemma}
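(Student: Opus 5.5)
We prove the lemma by simulating the execution on $\bD'$ step by step on $\bD$, mirroring the classical deterministic lifting argument of \cite{BVelection,CMasynj}. Fix an execution $\rho'$ of $\cA$ on $\bD'$: a finite sequence of steps, each given by the set of activated vertices, the messages delivered, and the random bits drawn at each activated vertex. Its probability $p$ is the probability that the sources of $\bD'$ produce exactly the drawn bits. We construct an execution $\rho$ on $\bD$ as follows. Whenever $\rho'$ activates a set $W'$ of vertices, $\rho$ activates the set $\varphi^{-1}(W')$ and delivers, on each arc $a$ of $D$, the messages delivered on $\varphi(a)$. Every $v\in\varphi^{-1}(v')$ draws the same bit as $v'$. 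The schedule of $\rho$ is determined by that of $\rho'$, so it is a legitimate non-adaptive schedule.

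We then show, by induction on the number of steps, the invariant that each $v\in V(D)$ is in the same state as $\varphi(v)$ and each channel $a\in A(D)$ carries the same message queue as $\varphi(a)$. Initially this holds because $\varphi$ preserves the labels in $L$. For the inductive step we use the covering property. Together with $\varphi(Sym(a))=Sym(\varphi(a))$ and the port labels being preserved, it gives a bijection between the arcs incident to $v$ and those incident to $\varphi(v)$ that respects port numbers. Hence $v$ receives, on the same ports, the same messages as $\varphi(v)$. Given the same state and the same random bit, it executes the same transition and sends the same messages on the same ports, so the invariant is preserved. Since activable vertices of $\bD$ project to activable vertices of $\bD'$, the lifted schedule is valid, and at the end every $v$ is in the state of $\varphi(v)$.

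The step needing care is the probability. Since $\varphi$ preserves the $B$-label, all vertices of $\varphi^{-1}(v')$ carry the same source label $b(v')$, so they read the same shared source. Requiring each $v$ to draw the same bit as $\varphi(v)$ is therefore a constraint on the source $b(v')$ itself, and imposes no extra independent condition. More precisely, the set of $B$-labels used in $\bD$ equals the set used in $\bD'$, because $\varphi$ is surjective and label preserving. Moreover, the $t$-th draw of a source $s$ in $\rho$ coincides with its $t$-th draw in $\rho'$, since all nodes with label $s$ draw synchronously along the lifted schedule and share the output sequence. Thus the event ``$X_S$ takes the values prescribed by $\rho'$'' is literally the same event in both executions. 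As the sources are i.i.d., this event has probability $p$ in $\rho$ as well. This counting argument is the main obstacle: one must check that the indexing of draws per source is consistent, i.e.\ that a shared source is advanced once per lifted step and not once per copy. This is exactly what the shared-source semantics $b(u,t)=b(v,t)$ guarantees.
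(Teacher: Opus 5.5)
Your proof is correct and follows the paper's argument. Each step is lifted to $\varphi^{-1}$ of the activated set via the local bijection of the covering, and since $\varphi$ preserves $B$-labels the lifted step uses exactly the same sources and bits, so it again has probability $\Pr(X_{S'}=x')=p$; you add the channel invariant, the port bookkeeping and the surjectivity argument that the source sets coincide. One minor slip: the claim that all nodes with label $s$ draw synchronously fails when distinct vertices of $\bD'$ share a source, but it is not needed, because each copy of $v'$ draws exactly as many bits as $v'$ and the per-node indexing $b(u,t)=b(v,t)$ that you invoke at the end already gives the consistency.
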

\begin{proof}
  We prove it for one step of execution on $\bD'$, that
  executes with probability $p>0$ on a set of nodes $V'$.  For each $v$ such
  that $\varphi(v) = v'$ with $v'\in V'$, we can apply the same execution step because of the local bijection.
  From the $B-$symmetric covering, all those $v$ share the same random source as
  $v'$, so the new global state of $\bD$ has probability
  $Pr(X_{S'} = x')$, where $S'$ is the set of sources corresponding to $V'$ and $x'$ the bits obtained at this one step execution. This is exactly probability $p$.  
\end{proof}

\subsection{Election in a $B-$Labeled Graph and Symmetric Coverings}
\label{algo-mk}

First, we give a characterization of networks where Election can be
solved in the asynchronous message passing system. This statement
means that for the moment, we consider we know the topology of the graph.
The {\em number of sheets $q$} of a covering is the number of preimages of any node (being a covering, all nodes have the same number of preimages when the covering is surjective).

\begin{theorem}\label{th-main}
Given a $B-$labeled graph $\bG=(G,\lambda,b)$ with a port-numbering $\delta$, 
there exists
a Las Vegas Election algorithm for $(\bG,\delta)$ if and only if
$(Dir(G),(\lambda,{}b,\delta))$ is symmetric covering minimal.
\end{theorem}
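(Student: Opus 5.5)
The plan has two directions, and the main difficulty sits entirely in the sufficiency direction.

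\textbf{Necessity.} Suppose $(\Dir{G},(\lambda,b,\delta))$ is not symmetric covering minimal. Then it is a symmetric covering, via some $\varphi$, of a $B-$labeled symmetric digraph $\bD'$ that is not isomorphic to it, and the number of sheets is $q\geq 2$. Assume a Las Vegas Election algorithm $\cA$ exists for $(\bG,\delta)$. Since $\cA$ is designed for the fixed topology and $\bD'$ need not be a graph of the family, I would not run $\cA$ on $\bD'$ directly. Instead I would work on $\bD$ itself.

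Fix the non-adaptive synchronous-like schedule in which every node of a fibre $\varphi^{-1}(v')$ is activated simultaneously. Under this schedule, nodes of one fibre have identical ports, identical initial labels and the same random source, by the definition of a $B-$covering. An induction on steps, in the spirit of Lemma~\ref{lem-lift}, shows that with probability $1$ all nodes in a fibre are in the same state at every step. Concretely, the execution on $\bD$ is the lift of an execution on $\bD'$, and every random draw is shared within a fibre.

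Since $\cA$ terminates with positive probability, some terminating run occurs. In that run, the elected node has $q\geq2$ copies that are also elected. This contradicts correctness.

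\textbf{Sufficiency.} Assume $(\Dir{G},(\lambda,b,\delta))$ is minimal. I would adapt the Mazurkiewicz-type enumeration algorithm $\mk$ of \cite{CGMelection}, where each node maintains a number and a local view (mailbox) and increases its number whenever it sees a conflicting view. The problem is that a node does not know its own $B-$label $b(v)$. To get around this, each node repeatedly draws random bits and appends them to its current label, broadcasting them as part of its local knowledge. The $t$-th drawn bits, aggregated over time, act as a growing pseudonym $\ell_t(v)$ with the following properties:
\begin{itemize}
\item nodes sharing a source always receive equal pseudonyms;
\item nodes with distinct sources receive distinct pseudonyms after finitely many draws, with probability~$1$.
\end{itemize}
The algorithm runs $\mk$ on the labels $(\lambda,\ell_t)$. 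Whenever a node's pseudonym extends, it is treated as a fresh label, so the enumeration restarts or refines. When $\mk$ stabilises with a locally reconstructed graph, a node checks (knowing $\bG$) that the reconstructed digraph has $|V(G)|$ vertices. At that point the final map onto the reconstruction is a covering of the same size, hence an isomorphism. The node then elects the vertex with number $1$.

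\textbf{Main obstacle.} The hardest step is proving that termination happens with positive probability and is always correct. The argument I would attempt runs as follows:
\begin{enumerate}
\item Once pseudonyms separate distinct sources, the pseudonym labeling is equivalent to $b$ up to renaming.
\item Hence $(\Dir{G},(\lambda,\ell,\delta))$ is minimal.
\item By the properties of $\mk$ from \cite{CGMelection}, the stable numbering is then injective, so the size test succeeds.
\item Conversely, the size test can never succeed wrongly. Any stable state of $\mk$ induces a covering from $\Dir{G}$ onto the reconstructed digraph. A covering between digraphs of equal size is bijective, so the numbering is a bijection and exactly one node gets number~$1$.
\end{enumerate}
The delicate point is handling pseudonyms that are still growing against an adaptive scheduler. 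I would show that detection is only triggered by a stable state of $\mk$ relative to the current labels. Such a state is safe whenever it is reached, whether or not pseudonyms have finished separating, because correctness rests only on the size argument in step~4.
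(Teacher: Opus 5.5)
Your necessity argument is correct and matches the paper's. Running the fibre-synchronous schedule on $\mathbf{D}$, where each fibre moves in lockstep, is the same as lifting a synchronous run of the quotient via Lemma~\ref{lem-lift}.

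The gap is in sufficiency, at exactly the point you flag as delicate. Your soundness argument (step~4) needs \Dir{G} to be an actual covering of the reconstructed digraph. That holds only at a globally stable state of \mk. Global stability is not locally observable, so your size test is necessarily evaluated on $M(v)$ at an arbitrary moment. The paper's test, for instance, fires as soon as a node takes the number $|V(G)|$, possibly before its neighbours have updated their local views. At such a moment $S(M(v))$ need not be coherent. Even with the SSP counters of the appendix, Proposition~\ref{prop-qcov-rec} only tells you that \Dir{G} is a quasi-covering of $\mathbf{D}_{M(v)}$ of radius $c(v)$. So ``a covering between digraphs of equal size is bijective'' has nothing to apply to. Your sentence ``detection is only triggered by a stable state'' is therefore the very claim that needs proof, and nothing in your algorithm enforces it.

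The missing ingredient is the Mazurkiewicz invariant, the second part of Proposition~\ref{prop-maz}: if a number $m$ occurs in some mailbox at step $i$, then every $m'\in[1,m]$ is the current number of some vertex at step $i$. Three consequences follow:
\begin{itemize}
\item The numbers in $M(v)$ always form an initial segment, so your test $|V(D_{M(v)})|=|V(G)|$ amounts to seeing $|V(G)|$ in $M(v)$.
\item At that very step the numbering is already a bijection onto $[1,|V(G)|]$.
\item The maximality argument behind the invariant (the current holder of a number carries the strongest entry ever issued with it) shows that nobody renumbers afterwards.
\end{itemize}
So your instinct that soundness should not depend on whether the pseudonyms have separated the sources is right, but the reason is this invariant, not a covering count.

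The invariant rests on monotonicity. Each node's triple $(\lambda,\overline{b},N)$ must never decrease in the comparison order, so that its own stale entries in mailboxes never beat its current one. This is why the paper compares bit strings alphabetically (a prefix precedes all its extensions) and appends further bits (rule $\mathbf C$) only when the mailbox is coherent.

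Treating an extended pseudonym as a ``fresh label'' and restarting, without an order in which every extension dominates its prefixes, breaks the invariant. A node can then see its own older entry as stronger and renumber, leaving in mailboxes a number nobody holds; that is exactly how the size test could fire wrongly. Restarting on every extension also leaves open whether the stable state used in your step~3 is ever reached.

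A smaller point: nodes sharing a source need not carry equal pseudonyms at a given time, since they may have drawn different numbers of bits. Step~2 survives if you say the pseudonyms \emph{refine} $b$, because any refinement of a minimal labeling is still minimal.
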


\begin{proof}[Necessary part]
The necessary part of this theorem is a direct consequence of Lemma
\ref{lem-lift}. Assume we have a Las Vegas Election algorithm for
$\bG$, consider $\bG'\neq \bG$ such that $\bG$  a covering 
$\bG'$. Even though the Election algorithm is ``meant'' for $\bG$, it is possible
to consider a synchronous execution of this algorithm on $\bG'$. Every step can be
lifted to the synchronous execution on $\bG$, so it will terminate with some positive
probability. Consider now the graphs with final labeling: it is consistent with a
symmetric covering and, since $\bG' \neq \bG$, the number of sheets is greater than 2, there will be
more than one $elected$ node on $\bG$. A contradiction.
\end{proof}

\subsection{A Las Vegas Election Algorithm knowing the Size} 

The sufficient part needs the following naming algorithm (it
will also be used later).
The aim of a naming algorithm is to get to a final configuration
where all nodes have unique identities. Again this is an
essential prerequisite to many other distributed algorithms that work
correctly only under the assumption that all nodes can be
unambiguously identified.  The enumeration problem is a variant of the
naming problem.  The aim of a distributed enumeration algorithm is to
attribute to each network vertex a unique integer in such a way that
this yields a bijection between the set $V(G)$ of vertices and $\{1,
2, \dots, |V(G)|\}$.

In Algorithm~\ref{algo-graph}, we describe a randomized enumeration algorithm knowing the size; by this way  we obtain an election algorithm by considering that the vertex having the number $|V(G)|$ is elected (vertices know $|V(G)|$).
This algorithm is an extension of \cite{CGMelection} which is inspired from the one presented in \cite{CMasynj},
which was an adaptation of the enumeration algorithm given by Mazurkiewicz in 
\cite{Mazur2}.

\textbf{Informal Description.}
We first give a general description of our algorithm $\mk$, when executed on a connected labeled simple graph
$\bG$ with port-numbering $\delta$ and random sources $b$.

During the execution of the algorithm, each vertex $v$ attempts to get
its own unique identity which is a number between $1$ and
$|V(G)|$.
At each step of the algorithm, it invokes the random source $b$ in order to produce incrementally a binary sequence $\overline{b}(v)$. This sequence will eventually be different with the one of other nodes that do not share the same random source.
Once a vertex $v$ has chosen a number $n(v)$, it sends it to
each neighbor $u$ with the initial label, the port-number $\delta_v(u)$ and the sequence $\overline{b}$. When a vertex $u$
receives a message from one neighbor $v$, it stores the number $n(v)$
with the port-numbers $\delta_u(v)$ and $\delta_v(u)$. From all 
information it has gathered from its neighbors, each vertex can
construct its \emph{local view} (which is the set of numbers of its
neighbors associated with the corresponding port-numbers). Then, a
vertex broadcasts its number, its label, its $\overline{b}$ sequence and its mailbox (which contains a set of  {\em local views}).  If a vertex
$u$ discovers the existence of another vertex $v$ with the same tentative number
then it should decide if it changes its own. To this end it
compares its local view with the local view of $v$. If the label of
$u$ together with the random bits sequence $\overline{b}(u)$ or the local view of $u$ is strictly weaker, then $u$ picks another
number --- its new temporary identity --- and broadcasts it again with
its local view and new $\overline{b}$ sequence.
To end of the computation, the node waits until it sees the number
$|V(G)|$ in its mailbox. At this moment if the digraph 
$(Dir(G),(\lambda,\delta))$ is
$B-$symmetric covering minimal, then every vertex will have a unique number from $\{1,
2, \dots, |V(G)|\}$: the algorithm is an Enumeration algorithm. 

\textbf{Labels.}
We consider a network $(\bG,\delta)$ where $\bG = (G,\lambda)$ is a
simple labeled graph and where $\delta$ is a port-numbering of
$\bG$. 
The function $\lambda : V(G) \rightarrow L$ is the
initial labeling.
Note that the $B-$label, that is the random source,
is not known by the node.

We assume there exists a total order $<_L$ on $L.$
We extend the order
$<_L$ to $L \cup \{\bot\}$ (assuming that $\bot \notin L$) as follows:
for all $\ell \in L$, $\bot < \ell $.

We say that $\overline{b}<_B\overline{b'}$ if $\overline{b}$ is before $\overline{b'}$ in the alphabetic order on binary strings.

During the execution, the label of each $v$ is a tuple $(\lambda(v), \overline{b}, n(v), N(v), M(v))$ where:
\begin{compactitem}
\item $\lambda(v) \in L$ is the initial label of $v.$ 
\item $n(v) \in \N$ is the current {\em number} of $v$ computed by the
  algorithm; initially $n(v) = 0$.
\item $\overline{b(v)}$ is a finite sequence of $\{0,1\}$.
\item $N(v) \in \NNNN$, where
  ${\mathcal{P}_{\mathrm{fin}}(-)}$ denotes the set of finite subsets, is the {\em local view} of $v$.
  At the end of the
  execution, if $(m,\ell,\overline{b},p,q) \in N(v)$, then $v$ has a neighbor $u$
  whose number is $m$, whose label is $\ell$, whose random sequence has prefix $\overline{b}$ and the arc from $u$ to
  $v$ is labeled $(p,q)$. Initially $N(v) = \{(0,\bot,\varepsilon,0,q) \mid q \in
  [1,\deg_G(v)]\}$.
\item  
$M(v)$ is a set, it is the {\em mailbox} of $v$;
  initially $M(v) = \emptyset$. 
An element of $M(v)$ has the following form: $(m,\ell,\overline{b},N)$ where
$m\in \N,$ $\ell\in L$, $\overline{b}$ in $\{0,1\}^*$ and $N$ is a local view.
It contains all information received
  by $v$ during the execution of the algorithm. If $(m,\ell,\overline{b},N) \in
  M(v)$, it means that at some previous step of the execution, there
  was a vertex $u$ such that $n(u) = m$, $\lambda(u) = \ell$, $\overline{b}(u)=\overline{b}$ and $N(u)
  = N$.  
\end{compactitem}

\textbf{Messages.}
Nodes exchange messages of the form $ < (n,\ell,\overline{b},
M),p >$. If a vertex $u$ sends a message $ < (n, \ell, \overline{b}, M),p >$
to one of its neighbor $v$, then the message contains the following
information: $n$ is the current number $n(u)$ of $u$, $\ell$ is the
label $\lambda(u)$ of $u$, $\overline{b}$ is the sequence of random bits,
$M$ is the mailbox of $u$, and $p =\delta_{u}(v)$.

\textbf{An Order on Local Views.}
The interesting properties of the algorithm rely on a total order on
local views, which has to be adapted from \cite{CGMelection} to take into account the
sequences of random bits. The idea is to complement the $\lambda$
label with the sequence of random bits, and when comparing such
extended labels, to use the alphabetic order, this way a previous
sequence coming from the same node is always ordered before the
current sequence.

Given two distinct sets $N_1, N_2 \in \NNNN$, we define $N_1 \prec
N_2$ if the maximum of the symmetric difference $N_1 \bigtriangleup
N_2 = (N_1 \setminus N_2) \cup (N_2 \setminus N_1)$ for the
lexicographic order belongs to $N_2$.
One also says that $(\ell,\overline{b},N) \prec (\ell',\overline{b'},N')$ if either $\ell <_L
\ell'$, or $\ell = \ell'$ and $\overline{b}<_B\overline{b'}$; or $\ell = \ell'$ and $\overline{b}=\overline{b'}$ and $N \prec N'$.
We denote by $\preceq$ the reflexive closure of $\prec$.
When doing comparison between mailboxes (equality or
set-inclusion), this is done without considering the bits
sequences, and we keep the same $=$ or $\leq$ operators to have lighter 
notations.

  Consider the mailbox $M = M(v)$ of a vertex $v$ during the execution
  of Algorithm $\mk$ on a graph $(\bG,\overline{b},\delta).$ We say that
  an element $(n,\ell,\overline{b},N) \in M$ is \emph{maximal} in $M$ if there does
  not exist $(n,\ell',\overline{b'},N') \in M$ such that $(\ell,\overline{b},N) \prec
  (\ell',\overline{b'},N')$. We denote by $S(M)$ the set of maximal elements of
  $M$.  From Proposition~\ref{prop-maz}, after each step of
  Algorithm $\mk$, $(n(v),\lambda(v),\overline{b},N(v))$ is maximal in
  $M(v)$.  
  The set $S(M)$ is said \emph{coherent} if it is non-empty and if for
  all $(n_1,\ell_1,\overline{b_1},N_1) \in S(M)$, for all $(n_2,\ell_2,\overline{b_2},p,q) \in N_1$,
  $p \neq 0$, $n_2 \neq 0$ and $\ell_2 \neq \bot$ and for
  $(n_2,\ell_2',\overline{b'},N_2') \in S(M)$ (there is only one by maximality), we have $\ell_2 = \ell_2'$, $\overline{b'}$ is a prefix of $\overline{b_2}$,  and $(n_1,\ell_1,\overline{b_1},q',p')
  \in N_2'$.

\medskip
Action ${\mathbf I}$ can be executed by a node on wake-up
only if it has not received any message. It chooses the
number $1$, updates its mailbox and informs its neighbors.

Action ${\mathbf R}$ describes the instructions the vertex
$v_0$ has to follow when it receives a message 
$<(n_1,\ell_1,\overline{b_1},M_1),p_1>$ from a neighbor via port $q_1$. 
First, it memorizes and it updates its mailbox  by
adding $M_1$ to it. Then it modifies its number if it is equal to $0$ or
if there exists
$(n(v_0),\ell',\overline{b'},{N'}) \in M(v_0)$ such that $(\lambda(v_0),\overline{b}(v_0),N(v_0)) \prec
(\ell',\overline{b'},{N'})$.
The new number is the next available number.
Then, it updates its local view by removing elements which corresponds
to the port $q_1$ (if they exist)
and by adding $(n_1,\ell_1,\overline{b_1},p_1,q_1)$ to $N(v_0)$. Then, it adds its new state
to its mailbox. Finally, if its
mailbox has been modified by the execution of all these instructions,
it sends its number and its mailbox to all its neighbors. 
If the mailbox of $v_0$ is not modified by the execution of the action
${\mathbf R}$, it means that the information $v_0$ has about its neighbor
(i.e., its number) was correct, that all the elements of $M_1$ already
belong to $M(v_0),$ and that for each $(n(v_0),\ell,\overline{b}(v_0),{N}) \in M(v_0)$, $(\ell,\overline{b},{ N})
\preceq(\lambda(v_0),\overline{b}(v_0),N(v_0))$.
This algorithm halts when $n(v_0)=|V(G)|$, and $v_0$ is elected; or when $|V(G)|$ appears in $M(v_0)$ and $v_0$ is non-elected.

Action ${\mathbf C}$ is extending the random bits
sequence in order to help break symmetries between nodes not
sharing the same random source. It is required that $M(v)$ is coherent to apply it so that this does not block progress for the ${\mathbf R}$ rule.

\begin{algorithm}[t]
    \footnotesize
${\mathbf I:}$ \{$n(v_0)=0$ and no message has arrived at $v_0$\}\\
   \Begin{
       $n(v_0):=1$ \;
       $\overline{b}(v_0):=\overline{b}(v_0)rbit()$\;
      $M(v_0):=\{(n(v_0),\lambda(v_0),\overline{b}(v_0),\emptyset)\}$ \;
          \For{$i:=1$ \KwTo $\deg(v_0)$}
          {\KwSty{send} $<(n(v_0),\lambda(v_0),\overline{b}(v_0),M(v_0)),i>$  through
            $i$ \;}}
\BlankLine
${\mathbf R:}$ \{A message $<(n_1,\ell_1,\overline{b_1},M_1),p_1>$ has arrived at
      $v_0$ through port $q_1$\}\\ 
   \Begin{
      $M_{old} := M(v_0)$ \;
      $M(v_0):= M(v_0)\cup M_1$ \;
      \If{$n(v_0) = 0$ or $\exists (n(v_0),\ell',\overline{b'},N') \in M(v_0) \mbox{
      such that } (\lambda(v_0),\overline{b}(v_0),N(v_0)) \prec (\ell',\overline{b'},N')$} 
          {$n(v_0):=1 + \max \{ n' \mid  \exists (n',\ell',\overline{b'},N') \in
            M(v_0)\}$ \;}
      $N(v_0):= N(v_0) \setminus \{(n',\ell',\overline{b'},p',q_1) \mid \exists
          (n',\ell',\overline{b'},p',q_1) \in N(v_0)\} \cup \{(n_1,\ell_1,\overline{b_1},p_1,q_1)
          \}$ \; 
      $M(v_0) := M(v_0)\cup \{(n(v_0),\lambda(v_0),\overline{b}(v_0),N(v_0))\}$ \;
      \If{$M(v_0)\neq M_{old}$} 
         {\For{$i:=1$ \KwTo $\deg(v_0)$}
             {\KwSty{send}
           $<(n(v_0),\lambda(v_0),\overline{b}(v_0),M(v_0)),i>$ through port
           $i$ \;}} 
     }
\BlankLine
${\mathbf C:}$ \{$M(v_0)$ is coherent and there is no $(n,\ell,\overline{b},N) \in M(v_0)$\}\\
\Begin{
    $\overline{b}(v_0):=\overline{b}(v_0)rbit()$\;
    {\For{$i:=1$ \KwTo $\deg(v_0)$}
      {\KwSty{send}
        $<(n(v_0),\lambda(v_0),\overline{b}(v_0),M(v_0)),i>$ through port
        $i$ \;}
    } 
  }
  \caption{Algorithm $\mk$,  with $n =|V(G)|$.\label{algo-graph}}

\end{algorithm}

\subsection{Some Properties of Algorithm $\mk$}

We consider an execution $\rho$ of $\mk$ on $(\bG,b,\delta)$ and for each
vertex $v \in V(G)$, we denote by
$(\lambda(v),n_i(v),\overline{b}_i(v),N_i(v),M_i(v))$ the
state of $v$ after the $i$th computation step of $\rho$ on $v.$
If the vertex $v$ executes an action  
from the step $i$ to the step $i+1,$  it is said active at step $i+1.$
The following proposition summarizes properties that are 
satisfied during an execution
$\rho$ on $(\bG,b,\delta)$. %

\begin{proposition}\label{prop-maz}
Consider a vertex $v$ and a step $i$.
 Then, $n_i(v) \leq n_{i+1}(v)$, $\overline{b}_i(v) \leq_B \overline{b}_{i+1}(v)$,
$N_{i}(v) \preceq N_{i+1}(v)$,  and $M_{i}(v) \subseteq M_{i+1}(v)$.
 For each $(m,\ell,\overline{b},N) \in M_i(v)$ and each $m' \in[1,m]$,
$\exists (m',\ell',\overline{b'},N') \in M_i(v), \exists v' \in V(G)$
such that $n_i(v')=m'$.
\end{proposition}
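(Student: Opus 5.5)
The plan is to prove all claims of Proposition~\ref{prop-maz} together by induction on the step index $i$, looking at which action (${\mathbf I}$, ${\mathbf R}$ or ${\mathbf C}$) $v$ executes between step $i$ and step $i+1$. If $v$ is not active, its state does not change and everything is trivial. The monotonicity claims come from reading the actions line by line. The last claim (every number appearing in a mailbox is, for each smaller value, held by some vertex) is a global invariant, and its induction hypothesis must cover all vertices at once.

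\textbf{Monotonicity.} Under ${\mathbf I}$ the number goes from $0$ to $1$, $\overline{b}$ gets one bit appended, the mailbox goes from $\emptyset$ to a singleton, and $N$ is untouched. Under ${\mathbf C}$ only $\overline{b}$ changes, again by appending a bit. Since $\overline{b}$ is a prefix of $\overline{b}\,rbit()$, we have $\overline{b}\leq_B \overline{b}\,rbit()$ in alphabetic order. Under ${\mathbf R}$ the mailbox is only enlarged by unions, so $M_i(v)\subseteq M_{i+1}(v)$. The number is either unchanged or replaced by $1+\max\{n'\}$ over the mailbox. That maximum is at least $n_i(v)$, because by the induction hypothesis $(n_i(v),\lambda(v),\overline{b}_i(v),N_i(v))\in M_i(v)$; so $n_{i+1}(v)\geq n_i(v)$.

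\textbf{Local view.} For $N_i(v)\preceq N_{i+1}(v)$, the ${\mathbf R}$ rule removes the entries on port $q_1$ and inserts $(n_1,\ell_1,\overline{b_1},p_1,q_1)$. I would show that the new entry dominates every removed entry in the lexicographic order, so the maximum of the symmetric difference lies in $N_{i+1}(v)$. The removed entry on port $q_1$ is either the initial $(0,\bot,\varepsilon,0,q_1)$, which is trivially smaller, or a value from an earlier message of the same neighbour $u$. FIFO channels guarantee that messages from $u$ arrive in sending order. So it is enough to know that the tuple $(n(u),\lambda(u),\overline{b}(u))$ that $u$ sends is nondecreasing over time, which follows from the monotonicity of $n(u)$ and $\overline{b}(u)$ at $u$ itself. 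One case needs care: $n(u)$ is unchanged while $\overline{b}(u)$ grows by a bit. Then the prefix is smaller in alphabetic order, so the new tuple is strictly larger, which is exactly what we need. I expect this ordering bookkeeping, the interaction between the lexicographic order on tuples and the alphabetic order on bit strings, to be the main obstacle.

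\textbf{Last claim.} Every tuple in any mailbox was inserted by some vertex $w$ as its own current state at some step $j\leq i$. So $m=n_j(w)$, and $w$ chose $m$ either as $1$ or as $1+\max$ of numbers in $M_j(w)$. By induction, every $m'<m$ already has a witness in $M_j(w)\subseteq M_i(v)$, together with a vertex that held number $m'$ at an earlier time. To conclude that some vertex has number $m'$ at time $i$ (and not just at an earlier time), I would use the fact that a vertex leaving the value $m'$ always jumps to a value larger than the current mailbox maximum. Take the maximal $m''\geq m'$ currently held; by a downward induction on values, each number at most the global maximum is still held by someone. This argument follows the one in \cite{CGMelection,Mazur2}: the random bits only refine the comparisons and do not affect which numbers are chosen.
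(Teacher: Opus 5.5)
Your monotonicity arguments match the paper's. Your treatment of a neighbour that keeps its number while extending $\overline{b}$ is more careful than the paper's, which simply asserts $n'_{old}\neq n'$.

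The genuine gap is in the last claim, where you must show that some vertex holds $m'$ \emph{at step $i$}. You invoke the fact that a vertex abandoning $m'$ jumps above its mailbox maximum, together with a ``downward induction on values''. This does not yield the claim. From a current holder of $m''$ you only learn that $m''-1$ was held at some earlier step, and every vertex that ever held it may since have moved on. Nothing in your sketch prevents all holders of $m'$ from leaving. Under a rule where two holders of the same number could each find the other stronger, your premises would still hold but the conclusion would fail.

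The missing idea is the paper's extremal argument.
\begin{itemize}
\item Among all pairs $(u,j)$ with $j\leq i$ and $n_j(u)=m'$, choose one whose triple $(\lambda(u),\overline{b}_j(u),N_j(u))$ is maximal for $\preceq$. Among those, take the latest step $j=i_0$. The set of such pairs is non-empty by the first half of the claim.
\item Suppose $i_0<i$ and $u$ still has $m'$ at step $i_0+1$. Then its triple has not decreased, which contradicts the choice of $(u,i_0)$. This needs that appending a bit is an increase for $<_B$ and that $N_{i_0}(u)\preceq N_{i_0+1}(u)$.
\item Suppose instead $u$ changed its number. Under ${\mathbf R}$ this requires a tuple numbered $m'$ strictly above its triple in its mailbox. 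That tuple comes from some holder of $m'$ at an earlier step with a strictly larger triple, again a contradiction.
\item Hence $i_0=i$, and $u$ holds $m'$ at step $i$.
\end{itemize}
This is exactly where the random bits matter, so the deterministic argument cannot just be imported. The order on $(\ell,\overline{b},N)$ must make each vertex's past states weaker than its current one. Otherwise stale copies of itself in mailboxes could force the strongest holder to yield.

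Two smaller repairs are needed.
\begin{itemize}
\item After ${\mathbf C}$ the mailbox is not updated. So only \emph{some} tuple carrying the number $n_i(v)$ is guaranteed to be in $M_i(v)$, not $(n_i(v),\lambda(v),\overline{b}_i(v),N_i(v))$ itself. That weaker fact is enough for $n_i(v)\leq n_{i+1}(v)$.
\item The inclusion $M_j(w)\subseteq M_i(v)$ needs justification. Mailboxes are always sent whole and merged by union. Hence any mailbox containing a tuple contains the mailbox that (one of) its creators had when creating it.
\end{itemize}
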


\begin{proof}
  We suppose that some internal event is executed at step $i+1$ by
  some vertex $v \in V(G)$. The property is obviously true for any
  vertex $w \in V(G) \setminus \{v\}$ and it is easy to see that
  $M_i(v) \subseteq M_{i+1}(v)$.

  If $n_i(v) \neq n_{i+1}(v)$, then $n_{i+1}(v) = 1 + \max \{n' \mid
  (n', \ell',\overline{b}, {\mathcal N'}) \in M_i(v)\}$ and either $n_i(v) = 0 < n_{i+1}(v)$
  or $(n_i(v), \lambda(v), N_i(v)) \in M_i(v)$ 
   and therefore $n_i(v) < n_{i+1}(v)$.

   The random bits sequence is extended by its suffix, so $\overline{b}_{i}(v)\leq_B \overline{b}_{i+1}(v)$.
   
  If $N_i(v) \neq N_{i+1}(v)$, then $v$ has received a message $<
  (n',\cdots,M'), p > $ through port $q$ and $N_{i+1}(v) = N_i(v)
  \setminus \{(n_{old}',p,q)\} \cup \{(n',p,q)\}$ for some (previous) number $n'_{old}$. Let $v'$ be the
  neighbor of $v$ such that $\delta_v(v') = q$ ; we know that
  $\delta_{v'}(v) = p$. 

  If $(n_{old}',p,q) \notin N_i(v)$, then $\max N_{i+1}(v)
  \bigtriangleup N_i(v) = (n',p,q) \in N_{i+1}(v)$ and then $N_i(v)
  \prec N_{+1}(v)$. 

  If $(n_{old}',p,q) \in N_i(v)$, then $n_{old}' \neq n'$. Let $j < i
  + 1$ be the computation step where $v'$ has sent the message $<
  (n',\cdots,M'), p > $. We know that $n_{old}' \leq n' = n_{j}(v')$
  and consequently, $\max N_{i+1}(v) \bigtriangleup N_i(v) = (n',p,q)
  \in N_{i+1}(v)$ and $N_i(v) \prec N_{+1}(v)$. Note that the previous inequality is obtained independently from the $b$ label, by definition of $\prec$.

  \medskip

For the second part of the proposition:
We first note that $(m, \ell, \beta, {\mathcal N})$ is added to $\bigcup\limits_{v \in
  V(G)} M_i(v)$ at some step $i$ only if there exists a vertex $v' \in
  V(G)$ such that $n_i(v') = m$, $\lambda(v') = \ell$, $\overline{b}(v') = \beta$ and $N_i(v') = {\mathcal N}$.

  Given a vertex $v \in V(G)$, a step $i$ and an element $(m,\ell, \beta, {\mathcal N})
  \in M_i(v)$, let $m'\leq M$ and $U = \{(u,j) \in V(G) \times \N \mid j \leq i,
  n_j(u)=m'\}$ and $U' = \{(u,j) \in U \mid \forall (u', j') \in U,
  (\lambda(u'),N_{j'}(u')) \prec (\lambda(u),N_j(u)) \mbox{ or }
  (\lambda(u'), N_{j'}(u')) = (\lambda(u),N_j(u)) \mbox{ and } j' \leq
  j \}$. Since $(m, \ell, \beta, {\mathcal N}) \in M_i(v)$, $U$ and $U'$ are both
  non-empty and it is easy to see that there exists $i_0$ such that
  for each $(u, j) \in U'$, $j = i_0$.
 
  If $i_0 < i$, let $(u, i_0) \in U'$ ; we know that $n_{i_0 + 1}(u)
  \neq n_{i_0}(u)$, but this is impossible, since by maximality of
  $(\lambda(u),\overline{b}(u),N_{i_0}(u))$, $u$ cannot have modified its
  number. Consequently, $i_0 = i$ and there exists $v' \in V(G)$ such
  that $n_i(v') = m'$.
  This ends the proof.
\end{proof}

   From~\cite{CMasynj}, we know that once $n(v),$ $N(v)$ and $M(v)$
   have reached their final values for all $v$, then $S(M(v))$ is
   coherent for any $v$. This is also the case when adding
   $\overline{b}$, thus, if $S(M(v))$ is not coherent, we know that
   $M(v)$ will be modified.

  If the set $S(M)$ is coherent, one can construct a labeled symmetric
  digraph $\bD_M=(D_M,\lambda_M)$ as follows. The set of vertices
  $V(D_M)$ is the set $\{n \mid \exists (n,\ell,N) \in S(M)\}$.  For
  any $(n,\ell,N) \in S(M)$ and any $(n',\ell',p,q) \in N$, there
  exists an arc $a_{n,n',p,q} \in A(D_M)$ such that $t(a) = n, s(a) =
  n'$, $\lambda_M(a) = (p,q)$. Since $S(M)$ is coherent, we can define
  $Sym$ by $Sym(a_{n,n',p,q}) = a_{n',n,q,p}$.

  One can show that Algorithm $\mk$ terminates with probability 1.
  Consider two nodes with different sources, since the sources are
  independent, with probability 1, the finite sequences of random bits
  will eventually be different. So by rule $\mathbf R$, they will have
  different numbers.

  When $\mk$ terminates, 
  the final labeling verifies the following properties: the digraph
  $(Dir(G),(\lambda,\delta))$ is a symmetric covering of $\bD_M$ (see
  Proposition 4.1 in \cite{CMasynj}).  Thus if
  $(Dir(G),(\lambda,\overline{b},\delta))$ is symmetric covering
  minimal then $\bD_M$ is isomorphic to $(Dir(G),(\lambda,\delta))$
  and therefore the set of numbers %
  is exactly
  $[1,|V(G)|]$: each vertex has a unique number. Moreover, 
  termination detection
  of the algorithm
  is possible. %
  Indeed,
  once a vertex gets the identity number $|V(G)|$ (which is known by
  each vertex), from Proposition \ref{prop-maz}, it knows that all the
  vertices have different identity numbers that will not change any
  more and it can conclude that the computation is over. In this case, one can also solve the election problem,
  since this vertex can take the label \emph{elected} and broadcasts
  the information that a vertex has been elected.  Finally, we obtain
  Theorem \ref{th-main} presented above.

\section{Quasi-coverings and The Election Problem for a Family
of Labeled Graphs}

\subsection{Quasi-Coverings}
In the previous section, we assumed the exact size of the network is
known, here we consider general structural knowledge.  This section
presents the second tool we use: quasi-coverings. This tool provides
necessary conditions for the randomized Election in a family of
labeled graphs for both unshared
and shared random sources.

Quasi-coverings have been introduced to study 
the termination detection problem \cite{MMW}. 
The idea behind quasi-coverings is to enable the simulation of local
computations on a given graph  in a restricted area of a larger graph, 
such that
a replay technique can be used to prove impossibility results by
contradiction.  
 The restricted area where we can
perform the simulation will shrink while the number of simulated steps
increases, so the replay technique cannot be used when a bound is known.  
In \cite{GMelection}, the definition of quasi-coverings have been slightly
modified to express more easily this property as a Quasi-Lifting
Lemma.  
The next definition is an
adaptation of this tool to labeled digraphs and is illustrated in Fig.~\ref{fig:qc}.

\begin{definition}\label{def-qc} 
Given two symmetric labeled digraphs $\bD_0, \bD_1$, an integer $r$,
a vertex $v_1 \in V(D_1)$ and a homomorphism $\gamma$ from
$\bB_{\bD_1}(v_1,r)$ to $\bD_0$, the digraph $\bD_1$ is a
\emph{quasi-covering} of $\bD_0$ of center $v_1$ and of radius $r$ via
$\gamma$ if there exists a  symmetric labeled
digraph $\bD_2$ that is a symmetric covering of $\bD_0$ via a
homomorphism $\varphi$ and if there exist $v_2 \in V(D_2)$ and an
isomorphism $\psi$ from $\bB_{\bD_1}(v_1,r)$ to $\bB_{\bD_2}(v_2,r)$
such that for any $x \in V(B_{D_1}(v_1,r)) \cup A(B_{D_1}(v_1,r))$,
$\gamma(x) = \varphi (\psi (x))$.

  We define the {\em number of sheets $q$} to be the
  minimal cardinality of the sets of preimages by $\gamma$: 
 $q = \min_{v\in V(D_0)}|\{w \in\gamma^{-1}(v)| 
             B_{\mathbf D_1}(w,1)\subset B_{\mathbf D_1}(v_1,r)\}|.$
We say that
a quasi-covering is {\em proper} if 
  $B_{\mathbf D_1}(v_1,r-1)$ is not $D_1.$ Any non-proper
quasi-covering is a covering.
\end{definition}

\begin{figure}[t]
\begin{center}
\scalebox{0.6}{\input{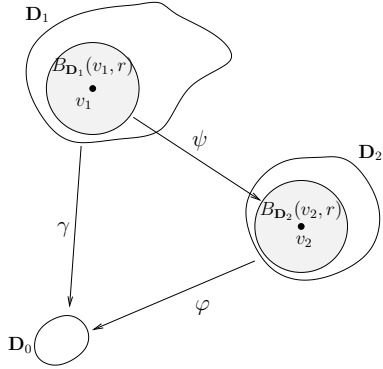}}
  \caption{\label{fig:qcov}Quasi-coverings diagram.
The ball $B_{\bD_1}(v_1,r)$ captures {\it {``the existence of large enough area
of one graph''}} ($\bD_1$) {\it {``that looks locally like another graph''}}
($\bD_0$).\label{fig:qc}}
\end{center}
\end{figure}

\begin{lemma}\label{techlemma}
  Let $\mathbf D_1$ be a proper quasi-covering of $\mathbf D_0$ 
of center $v_1$ and radius $r$ via
  $\gamma$. Then, for any $q\in\N$, if $r \geq q|V(D_0)|$ then $\gamma$
  has at least $q$ sheets. 
\end{lemma}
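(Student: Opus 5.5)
The plan is to find $q$ disjoint "layers" inside the ball $B_{\bD_1}(v_1,r)$, each of which contains a full preimage of every vertex of $\bD_0$. I will build these layers along a shortest path that leaves the ball, which exists because the quasi-covering is proper.

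Set $n=|V(D_0)|$. Since the quasi-covering is proper, $B_{\bD_1}(v_1,r-1)\neq \bD_1$. Because $\bD_1$ is connected, there is a vertex $w$ with $\dist_{D_1}(v_1,w)=r$. Fix a shortest path $v_1=u_0,u_1,\dots,u_r=w$. For $k<r$, every $u_k$ has its whole $1$-ball inside $B_{\bD_1}(v_1,r)$, so $\gamma$ is a local bijection at $u_k$ (transported through $\psi$ and $\varphi$).

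For $j=0,\dots,q-1$, consider the segment $u_{jn},\dots,u_{(j+1)n-1}$, which has $n$ vertices. I claim each such segment lies at distance at least ... wait — rather than on the path itself, I use balls. Let $c_j=u_{jn}$. The key observation: inside $\bD_2$, around $\psi(c_j)$, the ball $B_{\bD_2}(\psi(c_j),n-1)$ maps onto all of $\bD_0$, because $\bD_0$ is connected with diameter at most $n-1$ and $\varphi$ is a covering. Lifting paths of $\bD_0$ gives this surjectivity.

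Next I transport this back to $\bD_1$ and make the layers disjoint. The ball $B_{\bD_1}(c_j,n-1)$ lies within distance $jn+n-1\le qn-1\le r-1$ of $v_1$. So $\psi$ identifies it with the corresponding ball in $\bD_2$. Consequently every $v\in V(D_0)$ has a preimage $w_j\in B_{\bD_1}(c_j,n-1)$, and $w_j$ is at distance at most $r-1$ from $v_1$, so $B_{\bD_1}(w_j,1)\subset B_{\bD_1}(v_1,r)$.

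Distinctness of the $w_j$ is the main obstacle. Balls of radius $n-1$ around $c_j$ and $c_{j+1}$, which are at distance $n$, may still overlap. To fix this, I will instead pick $w_j$ by distance to $v_1$. Choose $w_j$ as a preimage of $v$ whose distance to $v_1$ lies in the window $[jn,(j+1)n-1]$; the windows are disjoint, so the $w_j$ are distinct.

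To show such a $w_j$ exists, I will take a shortest path in $\bD_0$ from $\gamma(c_j)$ to $v$ and lift it from $c_j$ along the path $u_{jn},\dots$. In more detail, at each step from $c_j$ I either follow the lifted path toward $v$ or the fixed outward path. I will use a walk $\pi$ in $\bD_0$ that starts at $\gamma(c_j)$ and first follows $\gamma(u_{jn}),\gamma(u_{jn+1}),\dots$. The point is that distance to $v_1$ changes by at most $1$ per step while moving outward along the geodesic $u$. Since within any $n$ consecutive vertices of the geodesic the images are either all distinct or contain a repeat, there is a lifted segment from some $u_k$, with $k\in[jn,(j+1)n-1-d]$ and $d\le n-1$ the distance, that reaches a preimage of $v$. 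I expect this step to need care, and the fallback is a pigeonhole on the images of $u_{jn},\dots,u_{(j+1)n-1}$. This gives $q$ distinct qualifying preimages for each $v$, hence $q$ sheets.
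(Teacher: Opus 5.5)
There is a genuine gap, and it is the step you flag yourself. The proposal needs that for every $v\in V(D_0)$ and every $j<q$ there is a preimage $w_j$ with $\operatorname{dist}(v_1,w_j)\in[jn,(j+1)n-1]$. You never establish this:
\begin{itemize}
\item Lifting a shortest path of $\mathbf D_0$ from $c_j$ only places its endpoint at distance in $[jn-(n-1),\,jn+(n-1)]$. Disjoint bands obtained this way need width about $2n$, which would require $r$ of order $2qn$.
\item The pigeonhole fallback concludes nothing. A repeat among $\gamma(u_{jn}),\dots,\gamma(u_{(j+1)n-1})$ says nothing about where the missing vertices lift.
\end{itemize}

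More seriously, the band claim is false. Let $\mathbf D_0$ be the path $y_0y_1\cdots y_{n-1}$ ($n\ge 2$) with a pair of mutually symmetric loops at $y_0$ and another pair at $y_{n-1}$. Let $\mathbf D_1=\mathbf D_2$ be the $3$-sheeted covering on $\{(y_i,s): s=1,2,3\}$ in which:
\begin{itemize}
\item the loops at $y_0$ lift to a double edge between sheets $1$ and $2$, and to loops on sheet $3$;
\item the loops at $y_{n-1}$ lift to a double edge between sheets $2$ and $3$, and to loops on sheet $1$;
\item path edges stay in their sheet.
\end{itemize}
From $v_1=(y_0,1)$, the preimages of $y_0$ are at distances $0$, $1$ and $2n$, because sheet $3$ is reachable only through $(y_{n-1},2)$. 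Take $r=2n$ and $q=2$. This is a proper quasi-covering, since $(y_0,3)\notin B(v_1,2n-1)$, and the lemma holds: $y_0$ has two preimages at distance at most $r-1$. But no preimage of $y_0$ lies at distance in $[n,2n-1]$. Concretely, the segment $u_n,\dots,u_{2n-1}$ of your geodesic to $(y_0,3)$ projects onto $y_{n-1},y_{n-1},y_{n-2},\dots,y_1$. Replacing the loops by triangles and chaining five sheets alternately at the two ends gives a simple-graph example where the band $[2n,3n-1]$ is empty, for $n\ge 8$. So the $q$ required preimages can crowd into few distance bands, and no band-by-band choice can work.

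The missing idea is to get distinctness from disjoint sheets rather than from distances; this is what the paper does.
\begin{itemize}
\item Fix a spanning tree $T$ of $\mathbf D_0$. Its lifts to $\mathbf D_2$ are pairwise disjoint, and each contains exactly one preimage of every vertex of $\mathbf D_0$.
\item The component of $\mathbf D_2$ containing $v_2=\psi(v_1)$ has more than $qn$ vertices, since it contains a vertex at distance $r\ge qn$ from $v_2$.
\item Start from the lift through $v_2$ and repeatedly add a lift joined by an arc to the current union (the paper phrases this via Reidemeister's theorem). This yields $q$ lifts whose union is connected with $qn$ vertices.
\item A connected set of $qn$ vertices containing $v_2$ lies in $B_{\mathbf D_2}(v_2,qn-1)\subseteq B_{\mathbf D_2}(v_2,r-1)$.
\item Pulling back by $\psi$ gives $q$ distinct preimages of each vertex whose $1$-balls lie in $B_{\mathbf D_1}(v_1,r)$.
\end{itemize}
In the example above, this picks up the two preimages of $y_0$ at distances $0$ and $1$, which your band argument cannot do.
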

\begin{proof}
Denote $\mathbf D_2$ the associated covering via $\varphi$. The quasi-covering being
proper, we have that $|B_{\mathbf D_1}(v_1, r)| \geq r \geq q|V
(\mathbf D_0)|$, hence $|V (\mathbf D_2)| \geq q |V (\mathbf D_0)|.$ We
can deduce that $\mathbf D_2$ has at least $q$ sheets.
Now, consider a spanning tree $T$ of $\mathbf D_0$ rooted on $\gamma(v_1)$.
Note $T_1$ the lifting via $\varphi$
of $T$ rooted on $v_2=\psi(v_1)$. By a theorem of Reidemeister \cite{Reidemeister},
there are $q - 1$ disjoint lifted spanning trees
$T_2 , \dots , T_q$ on $\mathbf D_2$ such that the subgraph induced by $T_1 \cup \dots \cup T_q$ is connected.
As $T$ has a diameter at most $|V(\mathbf D_0)| - 1$, we have that
$T_1 \cup \dots \cup T_q \subset B_{\mathbf D_2} (v_2 , q|V (\mathbf D_0)|).$
That means that every vertex of $\mathbf D_0$ has at least $q$ preimages in
$B_{\mathbf D_2} (v_2 , r),$ hence in $B_{\mathbf D_1} (v_1, r).$
\end{proof}

We precise the shrinking of the radius 
after $k$ rounds of a synchronous execution.%

\begin{lemma}[Quasi-Lifting Lemma]\label{quasilifting}
Let  $\bD_1$ be a symmetric 
labeled digraph that is a quasi-covering of $\bD_0$ of
center $v_1$ and of radius $r$ via $\gamma$. Let $k<r$ be a non negative 
integer.  For any algorithm $\cA$, any source labeling $b_0$ and $b_1$ such that
 $(\bD_1,b_1)$ be a symmetric 
labeled digraph that is a quasi-covering of $(\bD_0,b_0)$ of
center $v_1$ and of radius $r$ via $\gamma$.
Let $\bD_0'$ be the digraph obtained after $k$ rounds of a random synchronous 
execution of  $\cA$ on
$(\bD_0,b_0)$ with probability $p>0$. Then  there exists $\bD_1'$ obtained after a random
synchronous execution of $\cA$ on $(\bD_1,b_1)$ with probability $p$ that is a quasi-covering of $\bD_0'$ of center $v_1$ and
of radius $r-k.$
\end{lemma}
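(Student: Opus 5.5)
We prove the Quasi-Lifting Lemma by induction on $k$, reducing each round to the Probabilistic Lifting Lemma (Lemma~\ref{lem-lift}) applied on the covering $(\bD_2,b_2)$ of $(\bD_0,b_0)$ that witnesses the quasi-covering. Fix the data of Definition~\ref{def-qc} for $(\bD_1,b_1)$ and $(\bD_0,b_0)$: the covering $\varphi:(\bD_2,b_2)\to(\bD_0,b_0)$, the vertex $v_2$, and the isomorphism $\psi:\bB_{\bD_1}(v_1,r)\to\bB_{\bD_2}(v_2,r)$ with $\gamma=\varphi\circ\psi$. Since the $B$-labels are part of the labeling, $\psi$ preserves sources. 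Hence a vertex $w$ of the ball and $\psi(w)$ read from sources with the same label, and that label coincides with the one of $\gamma(w)$. The case $k=0$ is the hypothesis, so it suffices to treat a single round, going from radius $r$ to radius $r-1$ (with $r\ge 1$), and then iterate $k<r$ times.

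\textbf{Step 1 (lift to the covering).} Take a synchronous round of $\cA$ on $(\bD_0,b_0)$, occurring with probability $p_1>0$ for some draws $x_0$ on the sources $S_0$. By Lemma~\ref{lem-lift}, the same round on $(\bD_2,b_2)$ has the same probability. Indeed, each source of $\bD_2$ carries a label of a source of $\bD_0$, so the only draws that matter are $x_0$, and every $w\in V(D_2)$ ends in the state of $\varphi(w)$. Consequently the relabeled $\bD_2'$ is still a symmetric covering of $\bD_0'$ via the same $\varphi$.

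\textbf{Step 2 (transfer to $\bD_1$).} On $(\bD_1,b_1)$, perform the synchronous round with draws chosen as follows. Every source occurring in $\bB_{\bD_1}(v_1,r)$ gets the bit of the source with the same label in $\bD_0$; this is consistent because $\psi$ and $\varphi$ preserve labels. Sources occurring only outside the ball are given arbitrary bits. The probability of this set of executions is the probability of the constraint on the ball's sources. These sources are labels from $S_0$, so the probability is at most the product over $S_0$ and at least $p_1$. When the ball contains every label of $S_0$, the probability is exactly $p_1$. Otherwise we must sum over the unconstrained labels, and this is the point to handle carefully, either by restricting to the label set seen in the ball or by noting that the lemma only needs "probability $p$" in the sense of the same draws. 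Next, for $w$ with $\dist(v_1,w)\le r-1$, the new state of $w$ depends only on its old state, its draws, and the messages from its neighbours. All of these lie in $\bB_{\bD_1}(v_1,r)$ and correspond under $\psi$ to the data of $\psi(w)$. So $w$ ends in the state of $\psi(w)$, which equals that of $\gamma(w)$. Vertices and arcs at distance $r$ may diverge, which is why the radius shrinks by one.

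\textbf{Step 3 (conclude).} Restricting $\psi$ to $\bB_{\bD_1'}(v_1,r-1)\to\bB_{\bD_2'}(v_2,r-1)$ gives a label-preserving isomorphism, and $\gamma=\varphi\circ\psi$ still holds there. So $\bD_1'$ is a quasi-covering of $\bD_0'$ of center $v_1$ and radius $r-1$. Iterating $k$ times multiplies the probabilities identically on both sides and yields radius $r-k$.

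The main obstacle is the probability bookkeeping in Step 2. Sources of $\bD_1$ outside the ball, and labels of $\bD_0$ absent from the ball, must be handled so that the event on $\bD_1$ has probability exactly $p$. I would make this precise by defining the event on $\bD_1$ as "all sources carrying labels of $S_0$ produce $x_0$". By the $B$-quasi-covering hypothesis, labels appearing in the ball are labels of $\bD_0$, and the event's probability equals that of $x_0$ on $S_0$.
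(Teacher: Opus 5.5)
Your proof takes the paper's route: handle one synchronous round at a time, lift it from $(\bD_0,b_0)$ to the covering $\bD_2$ witnessing the quasi-covering via Lemma~\ref{lem-lift}, transfer it through $\psi$ to the vertices of $\bD_1$ within distance $r-1$ of $v_1$ (whose new state depends only on their radius-$1$ neighbourhood inside the ball), and iterate $k$ times. The only soft spot is the exact probability: the inclusion you invoke (labels in the ball are labels of $\bD_0$) gives only probability at least $p$, and equality needs every source label of $\bD_0$ to be among those you constrain in $\bD_1$ (e.g.\ $\gamma$ onto $V(D_0)$); the paper's own step (``the set of sources in $\bD_0$ is also $S_1$'') tacitly assumes the same surjectivity, and the lower bound is all that Propositions~\ref{NC-LV} and~\ref{NC-MC} use.
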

\begin{proof}
Consider an algorithm $\cA$ and a digraph $\bD_1 = (D_1, \lambda_1,b_1)$
that is a quasi-covering of $\bD_0 = (D_0, \lambda_0,b_0)$ of center $v_1$
and of radius $r$ via $\gamma$. There exists a symmetric
labeled digraph $\bD_2 =
(D_2,\lambda_2,b_2)$ that is a symmetric covering of $\bD_0$ via a
homomorphism $\varphi$ and a vertex $v_2 \in V(D_2)$ such that
$(B_{D_1}(v,r),\lambda_1,b_1)$ is isomorphic to $(B_{D_2}(v,r),\lambda_2,b_2)$
via an isomorphism $\psi$ and for any $v \in B(v_1,r)$, $\gamma(v)
= \varphi(\psi(v))$.

Let $\bD_0'=(D_0, \lambda_0')$ (resp. $\bD_1'=(D_1, \lambda_1'),
\bD_2'=(D_2, \lambda_2')$) be the labeled digraph where for each
$v$, $\lambda_0'(v)$ (resp.  $\lambda_1'(v)$, $\lambda_2'(v)$) is the
state of $v$ in $D_0$ (resp. $D_1, D_2$) after a computation step of
$\cA$ on $\bD_0$ with probability $p>0$.
To prove the lemma, it is
sufficient to show that we can have a similar step on $\bD_1, \bD_2$ with probability also $p$ while
$\bD_1'$ is a quasi-covering of $\bD_0'$ of
center $v_1$ and of radius $r-1$ via $\gamma$.
From Lemma~\ref{lem-lift}, we know that we can obtain, with probability $p$, $\bD_2'$  that
is a covering of
$\bD_0'$. Moreover, for each $v \in V(B_{\bD_1}(v_1,r-1))$,
$\lambda_1'(v) = \lambda_2'(\varphi(v))$ since
$(B_{\bD_1}(v,1),\lambda_1,b_1)$ is isomorphic to
$(B_{\bD_2}(\varphi(v),1),\lambda_2,b_2)$.
Consequently, $(B_{\bD_1}(v,r-1),\lambda_1',b_1)$ is isomorphic to
$(B_{\bD_2}(v,r-1),\lambda_2',b_2)$ via $\psi$, so $\bD_1'$ is a
quasi-covering of $\bD_0'$ of center $v_1$ and of radius $r-1$ via
$\gamma$.

The probability of such a configuration is $Pr(X_{S_1} = x^1)$ where
$S_1$ is the set of sources in $B_{\bD_1}(v_1,r-1)$ and $x^1$ the bits drawn on these sources.
By composition $\phi\circ\psi$, the set of sources in $\bD_0$ is also $S_1$. Hence this probability is equal to $p$.
\end{proof}

The following is the counterpart of the lifting lemma for
quasi-coverings. 
\vspace{-0.5ex}
\begin{corollary}[Randomized Quasi-Lifting corollary]\label{lem-qlift}
Let $\bD_1$ be a quasi-covering of $\bD_0$ of
center $v_1$ and of radius $r$ via $\gamma$.  For any algorithm $\cA$,
after $r$ rounds of a random synchronous execution of an algorithm $\cA$ on
$\bD_1$ with probability $p>0$, then $v_1$ is in the same state as $\gamma(v_1)$ after $r$
rounds of the synchronous execution of $\cA$ on $\bD_0$ with probability $p$.
\end{corollary}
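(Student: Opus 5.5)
The plan is to apply the Quasi-Lifting Lemma (Lemma~\ref{quasilifting}) round by round, stopping just before the radius reaches zero, and then read off the state of the center using the ball isomorphism.

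First, we run Lemma~\ref{quasilifting} with $k = r-1$ rounds. Its hypothesis $k<r$ holds. Take the $r-1$ round synchronous execution on $\bD_0$ that occurs with probability $p>0$; let $\bD_0'$ denote the resulting configuration. The lemma provides an execution on $\bD_1$ with the same probability $p$. Its configuration $\bD_1'$ is still a quasi-covering of $\bD_0'$, with center $v_1$, radius $1$, and the same $\gamma$. Explicitly, there is a covering $\bD_2'$ of $\bD_0'$ via $\varphi$, obtained from the Probabilistic Lifting Lemma (Lemma~\ref{lem-lift}). There is also an isomorphism $\psi$ between the state-labeled balls $\bB_{\bD_1'}(v_1,1)$ and $\bB_{\bD_2'}(\psi(v_1),1)$, and it satisfies $\gamma=\varphi\circ\psi$.

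Next, we perform the last round by hand. In a synchronous round, the new state of a node depends on three things: its own state, the messages on its incident arcs (determined by the states of its neighbours and the port labels), and the bit drawn from its source. Since $\psi$ preserves states, port labels and $B$-labels on the radius-$1$ ball, $v_1$ and $\psi(v_1)$ receive identical inputs. Their bits also coincide, because they share the same source label. Likewise, $\psi(v_1)$ and $\varphi(\psi(v_1))=\gamma(v_1)$ have identical inputs, by the local bijection of the covering and Lemma~\ref{lem-lift}. So after round $r$, $v_1$ is in the same state as $\gamma(v_1)$.

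The delicate point is the probability bookkeeping. We must check that conditioning on this last bit keeps the joint event probability equal to $p$. This holds because the sources involved in the ball around $v_1$ map, via $\varphi\circ\psi$, onto the sources used on $\bD_0$, so the events coincide, exactly as in the final paragraph of the proof of Lemma~\ref{quasilifting}. Equivalently, one can apply Lemma~\ref{quasilifting} with $k=r$, after noting that its proof still works when the radius drops to $0$: a ball of radius $0$ is just the vertex $v_1$, and the state equality there is exactly the claim. This same bookkeeping, together with the fact that the radius stays nonnegative throughout the $r$ rounds, is the step I would verify most carefully.
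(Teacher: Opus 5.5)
Your proposal is correct and follows the paper's route. The paper gives no separate proof and obtains the corollary by iterating the Quasi-Lifting Lemma (Lemma~\ref{quasilifting}) until the ball around $v_1$ shrinks to its center. Your $r-1$ applications plus one explicit final round (equivalently, running the lemma's argument down to radius $0$) is exactly that, with the same coupling of sources by $B$-labels for the probability bookkeeping, and only the trivial case $r=0$, where $\gamma$ preserves initial labels, goes unstated.
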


\subsection{Las Vegas and Monte Carlo Election in a Family of Labeled 
Graphs}%

Using the Randomized quasi-lifting lemma, we give two necessary
conditions for Las Vegas and Monte Carlo Election algorithms.

\begin{proposition}[LV Necessary condition]\label{NC-LV}
  Let $\gfam$ be a recursive 
family of connected  $b-$labeled  
digraphs that are symmetric covering minimal, such
  that there is a Las Vegas Election algorithm  for this family. Then there
  exists a computable function $\tau:{\mathcal I}\ra \N$ such that for
  all labeled digraph $\mathbf D$ of  
  $\gfam$, there is no quasi-covering of $\mathbf D$,
  distinct of $\mathbf D$, of radius greater than 
  $\tau(\mathbf D)$ in  $\gfam$.
\end{proposition}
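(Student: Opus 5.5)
We define $\tau$ by simulating the given Las Vegas algorithm $\cA$ synchronously on $\mathbf D$ itself, and then use the Randomized Quasi-Lifting corollary (Corollary~\ref{lem-qlift}) to transfer a terminating run of $\mathbf D$ to any large quasi-covering. This reproduces the deterministic argument of \cite{CGMelection}. The new ingredient is that we only need \emph{one} terminating run of positive probability, and its probability is preserved by the lifting.

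\textbf{Definition of $\tau$.} Fix $\mathbf D\in\gfam$ with its $B$-labeling $b$. Since $\cA$ is Las Vegas, a synchronous execution of $\cA$ on $(\mathbf D,b)$ terminates with positive probability. So there is a finite sequence of random draws, of positive probability, after which every vertex is in a terminal state, exactly one vertex is \textsc{Elected}, and the others are \textsc{Non-Elected}. The synchronous executions can be enumerated by the number of rounds and the bit strings drawn at each source. Given $\mathbf D$, we search this enumeration by dovetailing until we find a terminating run of some length $k$ rounds. This search is effective because $\gfam$ is recursive and $\cA$ is an algorithm. We set $\tau(\mathbf D)$ to be the least such $k$ (or just the first one found), which makes $\tau$ computable. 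One could add the constant $|V(D)|$ to $\tau$ if needed for the sheet-counting step.

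\textbf{Contradiction from a large quasi-covering.} Suppose $\mathbf D'\in\gfam$, $\mathbf D'\neq\mathbf D$, is a quasi-covering of $\mathbf D$ of center $v_1$ and radius $r>\tau(\mathbf D)$ via $\gamma$, with compatible source labelings. Fix the terminating run on $\mathbf D$ of length $k=\tau(\mathbf D)<r$ and probability $p>0$. By Lemma~\ref{quasilifting}, there is a run of $k$ synchronous rounds on $\mathbf D'$, with the same probability $p$, whose result is a quasi-covering of the final configuration of $\mathbf D$ of radius $r-k\ge 1$. In that run every vertex $w$ of $B_{\mathbf D'}(v_1,r-k)$ is in the same terminal state as $\gamma(w)$.

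We split into cases according to whether the quasi-covering is proper.

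\emph{Proper case.} The ball does not exhaust $\mathbf D'$. We take the radius large enough: by Lemma~\ref{techlemma}, inflating $\tau$ by a computable amount such as $2|V(D)|$ plus the run length, $\gamma$ has at least $2$ sheets inside the ball. Hence the elected vertex of $\mathbf D$ has at least two preimages in $\mathbf D'$ that are already in the terminal state \textsc{Elected}. With probability at least $p>0$ (under this non-adaptive synchronous schedule) $\mathbf D'$ ends with two elected vertices, contradicting correctness of the Las Vegas algorithm on $\mathbf D'\in\gfam$.

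\emph{Non-proper case.} $\mathbf D'$ is a covering of $\mathbf D$ distinct from it, so it has at least $2$ sheets. The same lifting argument, now Lemma~\ref{lem-lift}, gives two elected vertices with positive probability. This case actually cannot even occur, because $\mathbf D'$ is $B$-minimal by hypothesis.

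\textbf{Main obstacle.} The delicate step is ensuring that the lifted run on $\mathbf D'$ has positive probability. The sources appearing in the ball of $\mathbf D'$ must map to sources of $\mathbf D$ consistently, so that sharing is preserved and the draws can be copied. This is exactly what the $B$-labeled quasi-covering and the probability computation in Lemma~\ref{quasilifting} provide. A second point to check is that ``terminated'' in $\mathbf D$ means the lifted vertices never change state afterwards, regardless of what happens outside the ball. This holds because \textsc{Elected} and \textsc{Non-Elected} are terminal.
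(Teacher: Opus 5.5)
Your proof is correct and follows essentially the same route as the paper. You take a terminating synchronous run of length $T$ on $\mathbf D$, set $\tau(\mathbf D)=T+2|V(D)|$, and quasi-lift the run with Lemma~\ref{quasilifting} to get a proper quasi-covering of the final configuration of radius at least $2|V(D)|$. Lemma~\ref{techlemma} then gives two \textsc{Elected} vertices, and the non-proper case is excluded by the minimality of $\mathbf D'$. One fix: commit to $\tau(\mathbf D)=k+2|V(D)|$ from the start, since with $\tau=k$ or $k+|V(D)|$ only one sheet is guaranteed. Your dovetailing argument for the computability of $\tau$, and your attribution of properness to the minimality of $\mathbf D'$ rather than of $\mathbf D$, are slightly more careful than the paper's text.
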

\begin{proof}
Let  $\cA$ denote an election algorithm  on $\gfam$. 
Consider a
  labeled digraph $\mathbf D\in\gfam$,
  Since this is a Las Vegas algorithm, there exists some terminating
  synchronous execution of $\cA$, that is correct on $\mathbf D\in\gfam$.
  Denote $T$ the number of rounds of this (finite) random synchronous 
  successful execution  on $\mathbf D$.
  We have a corresponding sequence ${\mathcal C}=({\mathbf C_0} = {{\mathbf D}},{{\mathbf
      C}_1},...,{{\mathbf C}_T})$ where
 $\mathbf C_i$ is the  labeled graph obtained after the $i$th
 round. No step of $\cA$ can be applied 
on any vertex of $\mathbf C_T,$ (at the end of the round $t$ no message is 
sent by any vertex).
  This execution is associated with some probability $p>0$.
  Define $\tau(\mathbf D)=
  2|V(\mathbf D)|+ T$.  Then $\tau$ has the desired property.

By contradiction, let $\mathbf D'\in \gfam$ be a quasi-covering
of $\mathbf D$ of radius $\tau(\mathbf D),$ distinct of $\mathbf D.$ By
iteration of Lemma~\ref{quasilifting}, we get with some non null probability $\mathbf D''$ such that
$\mathbf D''$ is a quasi-covering of
$\mathbf C_T$ of radius $\tau(D)-T=2|V(D)|$. The labeled 
digraph $\mathbf D$ 
being symmetric covering minimal and
distinct of $\mathbf D'$, this final quasi-covering $\mathbf D''$ of $\mathbf C_n$ 
is proper. Hence, since by construction  the label ${\it {elected}}$ appears
exactly once in $\mathbf C_T.$, we have that  by Lemma~\ref{techlemma}, the  label ${\it
  {elected}}$ appears at least twice in $\mathbf D''$. A contradiction.
\end{proof}

Now we consider Monte Carlo algorithms. Since they can fail for some executions, the
previous technique does not apply. The radius condition will %
be necessary only for proper quasi-coverings.

\begin{proposition}[MC Necessary condition]\label{NC-MC}
  Let $\gfam$ be a recursive 
family of connected $b-$labeled  
digraphs such
  that there is a Monte Carlo Election algorithm  for this family. Then there
  exists a computable function $\tau:{\mathcal I}\ra \N$ such that for
  all labeled digraph $\mathbf D$ of  
  $\gfam$, there is no proper quasi-covering of $\mathbf D$ of radius greater than 
  $\tau(\mathbf D)$ in  $\gfam$.
\end{proposition}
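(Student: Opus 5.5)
We mimic the proof of Proposition~\ref{NC-LV}, replacing the existence of a terminating correct execution by the fact that a Monte Carlo algorithm always terminates. Let $\cA$ be a Monte Carlo Election algorithm for $\gfam$ and fix $\mathbf D\in\gfam$. Consider the synchronous schedule on $\mathbf D$. Every execution terminates, and with probability $p>0$ the final configuration has exactly one \emph{elected} node. So there is at least one finite random synchronous execution of $T$ rounds that ends correctly: exactly one \emph{elected} label, no pending message, and positive probability. The sequence of configurations is $\mathcal C=(\mathbf C_0=\mathbf D,\dots,\mathbf C_T)$. We set $\tau(\mathbf D)=2|V(\mathbf D)|+T$.

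To make $\tau$ computable, note that $\gfam$ is recursive and the synchronous schedule is fixed. We enumerate the finitely many bit strings of increasing length and simulate $\cA$ on $\mathbf D$ until we find a draw yielding a terminated configuration with a single \emph{elected} node. Such a draw exists because the success probability is positive and execution is finite on each branch. Taking $T$ to be the length of the first such run found gives a recursive function.

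Now suppose, for contradiction, that $\mathbf D'\in\gfam$ is a proper quasi-covering of $\mathbf D$ of radius $r>\tau(\mathbf D)$. We iterate the Quasi-Lifting Lemma~\ref{quasilifting} for $T$ rounds along the chosen successful execution. This gives, with the same probability $p'>0$, a synchronous execution on $\mathbf D'$ whose configuration $\mathbf D''$ after $T$ rounds is a quasi-covering of $\mathbf C_T$ of radius $r-T\geq 2|V(\mathbf D)|$. The lifted configuration keeps the ball and the radius structure of the original, so $\mathbf D''$ is still proper; this is the point where we use that the quasi-covering is assumed proper rather than invoking $B$-minimality as in the Las Vegas case. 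Lemma~\ref{techlemma} with $q=2$ then shows that $\gamma$ has at least two sheets. Hence the unique \emph{elected} vertex of $\mathbf C_T$ has at least two preimages $w$ in $\mathbf D''$ with $B(w,1)$ inside the ball, and these preimages are in the terminal state \emph{elected}.

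The main obstacle is that this gives only two \emph{elected} nodes in a prefix of an execution on $\mathbf D'$, not a contradiction with a fixed correct run. Since \emph{elected} is terminal, however, every continuation of this prefix ends with at least two elected nodes. This event has positive probability $p'$, so on $\mathbf D'$ the algorithm fails with positive probability. That alone does not contradict the Monte Carlo definition, which only requires success with some probability $p>0$.

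The fix is to interpret Monte Carlo correctness as success with probability at least a fixed $p$, and to amplify. We repeat the argument for every successful $T$-round execution on $\mathbf D$. We take $T$ to be the maximal termination time over all executions of the synchronous schedule on $\mathbf D$, which is finite by König's lemma since all branches terminate. Then every correct execution on $\mathbf D$ lifts to an incorrect one on $\mathbf D'$. It is also necessary that the lifted probabilities add up. We would check that the quasi-lifting maps distinct bit patterns on the sources of $\mathbf D$ to disjoint events on $\mathbf D'$, so that the failure probability on $\mathbf D'$ is at least the success probability $p$ on $\mathbf D$. By symmetry of the argument, applied with the roles reversed when $\mathbf D'$ and $\mathbf D$ play symmetric parts, this yields the contradiction. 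Making this counting precise is the delicate step.
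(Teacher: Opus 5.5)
Your argument fails at exactly the step you call delicate. Grant everything before it: $T$ is the maximal termination time, the radius exceeds $T+2|V(\mathbf D)|$, and the lifted events are disjoint. Then what you get is $\Pr[\mbox{success on }\mathbf D]\le\Pr[\mbox{failure on }\mathbf D']$, i.e.\ the two success probabilities sum to at most $1$. Against a uniform correctness level $\varepsilon$ this is a contradiction only if $\varepsilon>1/2$. The appeal to symmetry ``with the roles reversed'' is not available. Being a proper quasi-covering is not a symmetric relation ($\mathbf D$ is not a quasi-covering of $\mathbf D'$), so nothing bounds the success probability on $\mathbf D'$ from above.

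The paper instead uses amplification inside one large quasi-covering, with a threshold that depends on $\varepsilon$ (yours does not). It fixes a single successful run on $\mathbf D$ of length $T$. It lets $p_1>0$ be the probability of the quasi-lift of that run into a proper quasi-covering of radius $T+2|V(\mathbf D)|$; such a lift yields two \textsc{Elected} vertices by Lemma~\ref{techlemma}. It chooses $N$ with $(1-p_1)^N<\varepsilon/2$ and sets $\tau(\mathbf D)=N(T+2|V(\mathbf D)|)$. A proper quasi-covering of that radius contains $N$ disjoint regions, each a proper quasi-covering of radius $T+2|V(\mathbf D)|$. The paper concludes that some region fails with probability at least $1-\varepsilon/2$, contradicting correctness $\varepsilon$. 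No maximal termination time or K\"onig argument is needed there.

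Be aware, however, that this last step treats the $N$ regional lifting events as independent. That requires the regions to draw from disjoint sets of sources. Since $\gamma$ preserves $B$-labels, every region contains vertices carrying each source label of $\mathbf D$. When these are genuinely shared sources, the $N$ events coincide and nothing beyond your inequality is obtained. In that case your inequality is sharp and the statement itself fails, as the following family shows.
\begin{itemize}
\item Let $\mathbf D$ be a triangle labeled $a,b,c$ with three distinct sources, with port $1$ towards the successor and port $2$ towards the predecessor in the cyclic order $a,b,c$.
\item Let $\mathbf D'_m$ be the path $abcabc\cdots a$ on $3m+1$ vertices (port $1$ forward, port $2$ backward). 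All $a$-vertices share the source label of $a$ in $\mathbf D$, and likewise for $b$ and $c$. Add a pendant vertex labeled $e$ attached to the last $a$-vertex.
\end{itemize}
Balls around the middle of $\mathbf D'_m$ make it a proper quasi-covering of $\mathbf D$ of radius about $3m/2$. Now consider the following algorithm.
\begin{itemize}
\item Each $a$-vertex draws one bit, sends it to its neighbours, and becomes \textsc{Elected} iff the bit is $0$.
\item The $b$- and $c$-vertices become \textsc{Non-Elected}.
\item The $e$-vertex becomes \textsc{Elected} iff the bit it receives is $1$, and \textsc{Non-Elected} otherwise.
\end{itemize}
This always terminates and succeeds with probability exactly $1/2$ on every member of the recursive family $\{\mathbf D\}\cup\{\mathbf D'_m\}_{m\geq 1}$. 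On $\mathbf D'_m$ all $a$-vertices draw the same bit, so bit $0$ elects $m+1\geq 2$ vertices and bit $1$ elects only $e$.

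So a complete proof needs an extra hypothesis. One option is independent sources in disjoint regions (genuinely unshared randomness), where the paper's amplification goes through. The other is a correctness level above $1/2$, where your inequality already suffices with $\tau(\mathbf D)=T_{\max}+2|V(\mathbf D)|$.
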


\begin{proof}
 Let $\cA$ denote a Monte Carlo election algorithm on $\gfam$ with
 correctness $\varepsilon>0$.
We consider the synchronous schedule of algorithm $\cA$ on \bD, Since $\cA$
is a Monte Carlo Election algorithm, there exists a set of random bits
sequences $b_u$, $u\in V(\bD)$, and a time $T\in\N$ at which the
algorithm ends correctly. We denote $p>0$ the probability of the set
of sequences $b_u(1),\cdots, b_u(T)$, $u\in V(\bD)$.

Assume there is a proper quasi-covering $\bK_1$ of \bD of radius $T + 2V(\bD)$.
By
iteration of Lemma~\ref{quasilifting}, we can lift the execution.
We denote $b_1$ the sequence of bits quasi-lifted to $\bK_1$ from $\bD$ on the $T$ rounds.
This sequence has some probability $p_1>0$.
Assume now there is a proper quasi-covering \bK of \bD of radius
$R=N(T+2V(\bD)$, where $N$ is such that $(1-p_1)^N<\frac{1}{2}\varepsilon$. This
proper quasi-covering can be decomposed in $N$ disjoint regions that
define a quasi-covering of radius $T+2V(\bD)$.
For a given quasi-covering, the probability of lifting from \bD and $b$ to $\bK_1$ and $b_1$  
is at least $p_1$. When such a lifting occurs, we have that
the ${\it elected}$ label appears twice on $\bK_1$. Hence, the probability that
there is an incorrect output in one of the $N$ disjoint regions is at
least $1-\frac{1}{2}\varepsilon$,  a contradiction with the level of correctness
for $\cA$.
So there are no proper quasi-covering of $D$ of radius $R$ and 
setting $\tau(\mathbf D)$ as $R$ concludes the proof.
\end{proof}

The sufficient parts needed to prove Thm.~\ref{electionLV} and
\ref{electionMC} are presented in the Appendix since they are
extensions of the algorithms from \cite{CGMelection}.

\section{Applications}

\label{sec:appli}

\subsection{Representative Cases of Structural Knowledge}
We present a summary of the consequences of the two
characterizations and discuss relation with works from the literature.

\label{sec-app}
\begin{figure}[t]
\begin{center}\begin{tabular}{|c|c|c|c|}
\hline & \multicolumn{3}{|c|}{\textbf{$B-$Minimal Graphs}}\\
\hline \textbf{Knowledge} & \textbf{None} & \textbf{Sharing Bound/Bound} & \textbf{2-Approx/Size/Topology}\\
\hline \multicolumn{1}{|l|}{Deterministic} & \raisebox{-3pt}{\color{red}\XSolidBrush}& \raisebox{-3pt}{\color{red}\XSolidBrush}& \raisebox{-3pt}{\color{red}\XSolidBrush}   (\cite{Angluin} for rings)\\
\hline \multicolumn{1}{|l|}{Las Vegas} & \raisebox{-3pt}{\color{red}\XSolidBrush}& \raisebox{-3pt}{\color{red}\XSolidBrush} (\cite{IRelection} for rings)& \raisebox{-3pt}{\color{green}\CheckmarkBold} (\cite{IRelection} for rings) \\
\hline \multicolumn{1}{|l|}{Monte Carlo} & \raisebox{-3pt}{\color{red}\XSolidBrush} (\cite{IRelection} for rings)& \raisebox{-3pt}{\color{green}\CheckmarkBold} (\cite{SSelection} for rings) & \raisebox{-3pt}{\color{green}\CheckmarkBold} (\cite{electionvhp} with vhp)\\
\hline \end{tabular}\end{center}
\caption{Summary of our Election computability results  for $B-$minimal graphs and various knowledge, with previously known results.\label{sumcomp}}
\end{figure}

We consider unshared, shared
randomness, and also bounded sharing randomness (in Section~\ref{sec-boundedSharing}).
There are three cases~: the general case, the $B-$minimal
graphs, and the minimal graphs.  Note that $(\bG,b)$ being $B-$minimal
does not imply that $\bG$ is minimal, but the converse is true.  A
$B-$minimal graph could be an anonymous graph with at least one
unshared source of randomness, a covering-minimal graph or a non
covering-minimal graph where the shared sources do not align with the
symmetry from the covering structure.
Comparing to \cite{FGL}, we remark that $B-$minimality extends the
condition given in \cite{FGL} for cliques. E.g., consider rings, if the
gcd of the size of the $B$ classes is exactly one then the ring is
$B-$minimal.

When a graph is not $B-$minimal, then it does not admit a Las Vegas
Election algorithm even knowing the topology (including the random
sources).  When it is $B-$minimal, then it admits a Las Vegas Election
algorithm provided we have enough knowledge. In particular, a
consequence of Thm.~\ref{electionLV}, is that $B-$minimal graphs
admits a Las Vegas Election algorithm knowing the size.
Knowing a bound on the size does provide a (uniform) $\tau$ function for
limiting proper quasi-covering but not quasi-covering, because
coverings are quasi-coverings of any radius. So for having Las Vegas
Election, enough knowledge to rule out coverings within the same knowledge is necessary. This is the case of 
the strict 2-approximation knowledge, since any strict covering is at least twice as large.
Therefore, while knowing the topology is the strongest possible knowledge, it is actually
not necessary, nor to actually know the size.

What about Monte Carlo Election algorithm in the general case? From
Thm.~\ref{electionMC}, we get that it is not possible to have an
Election algorithm when nothing is known (because this enables
quasi-coverings of arbitrary large radius). However, knowing a bound
on the size does provide a way for limiting proper quasi-covering, but
it is also the case for other bounds, like the sharing bound in
Section~\ref{sec-boundedSharing} that enables a non-uniform $\tau$
function and is therefore a more generic example.  Finally, when
considering minimal graphs, it appears the %
characterizations are equivalent~:

\begin{theorem}
  Let \gfam be a family of covering-minimal graphs, then it is
  possible to solve Election with a deterministic algorithm on \gfam
  if and only if it is possible to solve Election with a Las Vegas algorithm on
  \gfam if and only if it is possible to solve Election with a Monte Carlo
  algorithm on \gfam.
\end{theorem}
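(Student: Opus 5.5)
The plan is to close the cycle of implications deterministic $\Rightarrow$ Las Vegas $\Rightarrow$ Monte Carlo $\Rightarrow$ deterministic. The first two implications are essentially free. A deterministic algorithm is a Las Vegas algorithm that ignores its random sources: it terminates with probability $1$ and is always correct. A Las Vegas algorithm can be turned into a Monte Carlo one in the usual way. The paper's notion of Monte Carlo requires termination on every run, so this conversion is not quite literal. Rather than building a cutoff, I would argue at the level of the characterizations. By Theorem~\ref{electionLV}, Las Vegas solvability gives a function $\tau$ excluding all quasi-coverings of radius greater than $\tau(\bD)$ in $\gfam$, except $\bD$ itself. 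In particular it excludes all proper quasi-coverings, which is exactly the condition of Theorem~\ref{electionMC}. So the real content is the implication Monte Carlo $\Rightarrow$ deterministic.

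For that implication, assume a Monte Carlo Election algorithm exists on $\gfam$. By Proposition~\ref{NC-MC} (the necessary part of Theorem~\ref{electionMC}), there is a recursive $\tau$ such that no $\bD\in\gfam$ has a proper quasi-covering in $\gfam$ of radius greater than $\tau(\bD)$. I then want to recover the deterministic characterization of \cite{CGMelection}: every graph in $\gfam$ is covering-minimal, and there is a recursive $\tau'$ excluding every quasi-covering other than $\bD$ itself, proper or not, of radius greater than $\tau'(\bD)$. Minimality of the underlying labeled graphs holds by hypothesis. It remains to handle the non-proper quasi-coverings, that is, genuine coverings $\bD'\in\gfam$ of some $\bD\in\gfam$. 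Here I use minimality of $\bD'$ (it lies in $\gfam$): if $\bD'$ is a symmetric covering of $\bD$, then $\bD'$ minimal forces the covering to be an isomorphism, so $\bD'\simeq\bD$. Hence every non-proper quasi-covering of $\bD$ inside $\gfam$ is $\bD$ itself, and $\tau'=\tau$ works. Applying the deterministic characterization theorem of \cite{CGMelection} then yields a deterministic Election algorithm.

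The main obstacle is the bookkeeping between three kinds of objects: plain labeled graphs (deterministic setting), $B$-labeled graphs (randomized setting), and the way $\gfam$ is quantified over source labelings. Two points need care. First, $\bG$ minimal implies $(\bG,b)$ $B$-minimal for every $b$, since any $B$-covering is in particular a covering of the underlying labeled graph. This ensures the $B$-minimality hypothesis of Theorem~\ref{electionLV} is met regardless of how sources are shared. Second, a quasi-covering of $B$-labeled graphs projects to a quasi-covering of the underlying labeled graphs, and conversely a quasi-covering of the underlying graphs can be equipped with compatible source labels. The converse is needed so that the $\tau$ obtained in the randomized setting also bounds the quasi-coverings relevant to the deterministic characterization. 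I would settle this first by choosing, on the lifted graph, the source labeling pulled back along $\gamma$ inside the ball and arbitrary fresh sources outside it, so that radii are preserved.
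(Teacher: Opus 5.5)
Your proposal is correct and follows the route the paper intends. The paper gives no separate proof: it only remarks that, for covering-minimal families, the deterministic characterization of \cite{CGMelection} and Theorems~\ref{electionLV} and~\ref{electionMC} coincide, because minimality implies $B$-minimality and forces every non-proper quasi-covering inside $\gfam$ to be $\bD$ itself. Your explicit treatment of source labelings fills in bookkeeping the paper leaves implicit; in particular, pulling $b$ back along $\gamma$ is what lets the randomized $\tau$ also bound unlabeled quasi-coverings.
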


We also discuss
previous works, that were only on anonymous rings.
This shows how our general results
are extending all the known cases.
These results are summarized in Fig.~\ref{sumcomp}. 

\begin{compactitem}
\item \cite{Angluin} the seminal work of Angluin is the first proof of the
impossibility of election in anonymous algorithms. it was done in the
local model which is a stronger model than asynchronous message passing.

\item \cite{IRelection-conf,IRelection}: Itai and Rodeh gave an impossibility
  proof for Las Vegas algorithms on anonymous rings knowing a bound.,
  In his textbook
  \cite[Th.9.12]{Tel}, Tel also presents this proof for rings. 
  Itai and Rodeh gave a probabilistic {Election} algorithm when the size
  of the graph is known. Some precise complexity bounds are also given.

\item \cite{SSelection}:Schieber and  Snir present a Monte Carlo algorithm
knowing a bound  on the size.

\item \cite{electionvhp}: Codenotti \emph{et al} present an {Election} algorithm that is
correct with v.h.p\footnote{very high probability} with 1 random bit, knowing the size of the network. 
This result can be easily extended to knowing only a bound using
more random bits.

\end{compactitem}

When the topology is known, probabilistic algorithms are better
solutions than deterministic algorithms. Because it not possible to lift executions forever,
knowing the exact size or topology means it is possible to  wait for Mazurkiewicz' algorithm
to terminate. That there is no enumeration yet can be easily checked 
from the knowledge.

\subsection{Bounded Sharing of Sources of Randomness}
\label{sec-boundedSharing}
We consider now the family $\mathcal B_K$ of graphs where for every
graph, the shared sources are known (that is there is bijection
between the label and the sources) and where the number of nodes
sharing a source with another node is bounded by some number
$K\in\N$. We have a Monte Carlo Election algorithm, this is actually a
corollary of Lemma~\ref{techlemma}.
Since there exists $B-$coverings in $\mathcal B_K$, there is no
Las Vegas algorithm for this knowledge.
We underline that the nodes need to know the class of their random
sources. Otherwise, since networks in $\mathcal B_K$ have unbounded
size, there would be unbounded proper quasi-coverings, so it is
impossible to solve Election in the unlabeled version of $\mathcal B_K$.

\begin{proposition}
  Let $K\geq1$, there is a Monte Carlo Election algorithm for $\mathcal B_K$.
\end{proposition}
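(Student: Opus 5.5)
We apply the sufficient direction of Theorem~\ref{electionMC} to $\mathcal B_K$. This reduces the proposition to exhibiting a recursive function $\tau:\mathcal B_K\ra\N$ such that no $\bD\in\mathcal B_K$ has a proper quasi-covering in $\mathcal B_K$ of radius greater than $\tau(\bD)$. $\mathcal B_K$ is recursive: membership is decided by counting, for each $B$-label, how many vertices carry it, and checking that the number of vertices sharing a source is at most $K$. Since the $B$-labels are part of the labeled digraph (the nodes know their source class), quasi-coverings in this family must preserve them.

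We take $\tau(\bD)=(K+1)|V(D)|$. Suppose $\bD_1\in\mathcal B_K$ is a proper quasi-covering of $\bD$ of center $v_1$ and radius $r>\tau(\bD)$ via $\gamma$. By Lemma~\ref{techlemma} with $q=K+1$, $\gamma$ has at least $K+1$ sheets. Fix any vertex $v\in V(D)$. Then $v$ has at least $K+1$ distinct preimages $w$ inside $B_{\bD_1}(v_1,r)$. Because $\gamma=\varphi\circ\psi$ preserves labels, including the $B$-label, all these preimages carry the same source label $b(v)$. Hence at least $K+1$ vertices of $\bD_1$ share one source. This contradicts $\bD_1\in\mathcal B_K$, since there at most $K$ vertices may share a source. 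We apply this to a vertex $v$ whose source is genuinely shared, or to any vertex if we read the bound as limiting every class to size at most $K$.

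The main obstacle is exactly this interpretation of the bound: ``the number of nodes sharing a source with another node is bounded by $K$''. If the bound constrains every $B$-class to have size at most $K$, the argument above works verbatim. If it only bounds the total number of nodes lying in non-singleton classes, a different argument is needed. Since $\gamma$ has at least $K+1$ sheets and every vertex of $\bD$ has at least $K+1\geq 2$ preimages with a common label, every preimage lies in a non-singleton class. There are then at least $(K+1)|V(D)|>K$ such nodes in $\bD_1$, which again contradicts membership in $\mathcal B_K$. Either way the same $\tau$ works, possibly after enlarging it to $(K+1)|V(D)|+1$ to account for the strict/non-strict radius conventions of Lemma~\ref{techlemma}.

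Finally, $\tau$ is clearly computable, since it only uses $|V(D)|$ and the constant $K$. Theorem~\ref{electionMC} then yields a Monte Carlo Election algorithm for $\mathcal B_K$. Since $\tau$ depends on $|V(D)|$, the resulting bound is non-uniform, consistent with the remark that $\mathcal B_K$ contains graphs of unbounded size and diameter.
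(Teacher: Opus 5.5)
Your proposal is correct and matches the paper's proof. You take the same $\tau(\mathbf{D})=(K+1)|V(D)|$, apply Lemma~\ref{techlemma} with $q=K+1$ so that some source label would be carried by more than $K$ vertices of the quasi-covering, and conclude with Theorem~\ref{electionMC}. You simply spell out the label-preservation and counting step that the paper leaves implicit, and that step works under either reading of the sharing bound.
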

\begin{proof}
  Lemma~\ref{techlemma} implies that there is no proper quasi-covering of radius $(K+1)|V(D)|$ for $\bD\in\mathcal B_K$.
  So  $\mathcal B_K$ satisfies the condition for Thm.~\ref{electionMC} with $\tau(\bD) = (K+1)|V(D)|$.
  Therefore Monte Carlo Election is possible in $\mathcal B_K$ for any $K$.
\end{proof}

\appendix

\section{An Election Algorithm for a Family of Labeled
Graphs}\label{sec-algo}

In this section, we present the algorithm we use to obtain our
sufficient conditions corresponding to the previous necessary conditions Prop.~\ref{NC-LV} and Prop.~\ref{NC-MC},
and finally a complete characterization of families of
graphs which admit an Election algorithm as Thm.~\ref{electionLV}
and Thm.~\ref{electionMC} given in the Introduction.

This algorithm is a combination of
the randomized Election algorithm $\mk$ for symmetric minimal labeled digraphs presented
in  Section~\ref{algo-mk}
 and a generalization of an algorithm of Szymanski,
Shy and Prywes (the SSP algorithm for short)~\cite{SSPj}. 
 The SSP algorithm was
originally used to detect the global termination of an algorithm with
local termination provided the nodes initially know a bound on the
diameter of the graph.  

Actually, Algorithm \ref{algo-graph} always
terminates on any network $(\Dir{\bG},\delta)$ and during the
execution, each node $v$ can reconstruct at some computation step
$i$ a symmetric labeled digraph $\bD_i(v)$ such that $(\Dir{\bG},\delta)$ is a
quasi-covering of $\bD_i(v)$. However, this algorithm does not enable
$v$ to compute the radius of this quasi-covering. We use a
generalization of the SSP algorithm to enable each node to compute
a lower bound on the radius of these current quasi-coverings.

\subsection{Termination Detection Mechanism}
A vertex will detect if its state is final by using the bound from the necessary 
condition given by Proposition~\ref{NC-MC}~: it will check whether the computation is stable on a radius greater than the quasi-covering radius bound thus
we add to the label of each vertex two items:
\begin{compactitem}
\item $c(v) \in \Z$ is a counter and initially $c(v) = -1$. In some
  sense, $c(v)$ represents the distance up to which all vertices have
  the same mailbox  as $v$ (up to a suffix on the sequences of random bits). 
\item $A(v) \in {\mathcal{P}_{\mathrm{fin}}(\N\times \N)}$ 
encodes the information $v$ has about the values of
  $c(u)$ for each neighbor $u$. Initially, $A(v)=\{(q,-1)\mid q \in
  [1,\deg_G(v)]\}$.
\end{compactitem}
Thus now
during the execution, the label of each $v$ is a tuple $(\lambda(v),\overline{b}
n(v), N(v), M(v),c(v),A(v)).$

A message sent by a vertex $u$ via the port $p$ to the vertex $v$ has
the following form $ <(n,\ell,\overline{b},M,a),p >$ where $n$ is the current
number $n(u)$ of $u$, $\ell$ is the label $\lambda(u)$, $\overline{b}$ is the current sequence of random bits,
$M$ is the mailbox of $u$, $a$ is the value of the counter $c(u)$ and
$p =\delta_{u}(v)$.

\subsection{ Algorithm $\mk_{\tau}$}
\begin{algorithm}[t]
  \footnotesize
${\mathbf I:}$ \{$n(v_0)=0$ and no message has arrived at $v_0$\}\\
  \Begin{
      $n(v_0):=1$ \;
      $c(v_0):= -1$ \;
      $\overline{b}(v_0):=\overline{b}(v_0)rbit()$\;
      $M(v_0):=\{(n(v_0),\lambda(v_0),\overline{b}(v_0),\emptyset)\}$ \;
          \For{$i:=1$ \KwTo $\deg(v_0)$}
          {\KwSty{send} $<(n(v_0),\lambda(v_0),\overline{b}(v_0),M(v_0)),i>$  through
            $i$ \;}
    }
\BlankLine
${\mathbf R:}$ \{A message $M_1=<(n_1,\ell_1,\overline{b_1},M_1,a_1),p_1>$ has arrived at
      $v_0$ through port $q_1$\}\\ 
   \Begin{
       $M_{old} := M(v_0)$ \;
       $c_{old} := c(v_0)$ \;
       $M(v_0):= M(v_0)\cup M_1$ \;
       \If{ $n(v_0) = 0$ or $\exists (n(v_0),\ell',\overline{b'},N') \in M(v_0)$ 
         such that $(\lambda(v_0),\overline{b}(v_0),N(v_0)) \prec (\ell',\overline{b'},N')$
       }{
         $n(v_0):=1 + \max \{ n' \mid  \exists (n',\ell',N') \in  M(v_0)\}$ \;
         $N(v_0):= N(v_0) \setminus \{(n',\ell',\overline{b'},p',q_1) \mid \exists
          (n',\ell',\overline{b'},p',q_1) \in N(v_0)\} \cup \{(n_1,\ell_1,\overline{b_1},p_1,q_1)
         \}$ \; 
         $M(v_0) := M(v_0)\cup \{(n(v_0),\lambda(v_0),\overline{b}(v_0),N(v_0))\}$ \;
         }
      \If{$M(v_0)\neq M_{old}$}
         {$c(v_0) := -1$ \;
          $A(v_0) := \{(q',-1) \mid 1\leq q'\leq deg(v_0) \}$\;}
      \If{$M(v_0)= M_1$} 
         {$A(v_0) := A(v_0) \setminus \{(q_1,a') \mid \exists (q_1,a') \in
       A(v_0)\} \cup \{(q_1,a_1)\}$ \;} 
      \lIf{$(\forall (q',a') \in A(v_0), c(v_0) \leq a'$ and $\mathbf {QC}(v_0))$}
	 {$c(v_0) := c(v_0) + 1$ \;}
      \If{$M(v_0)\neq M_{old}$ or $c(v_0) \neq c_{old}$} 
         {\For{$i:=1$ \KwTo $\deg(v_0)$}
             {\KwSty{send}
           $<(n(v_0),\lambda(v_0),\overline{b}(v_0),M(v_0),c(v_0)),i>$ through port
               $i$ \;}
         } 
     }
\BlankLine
${\mathbf C:}$ \{$c(v)< \tau(\bD_{M(v)}).$\}\\
\Begin{
    $\overline{b}(v_0):=\overline{b}(v_0)rbit()$\;
    {\For{$i:=1$ \KwTo $\deg(v_0)$}
      {\KwSty{send}
        $<(n(v_0),\lambda(v_0),\overline{b}(v_0),M(v_0)),i>$ through port
        $i$ \;}
    } 
  }

  \caption{Algorithm $\mk_{\tau}.$ ${QC}(v)$ is a shortcut for
  $(S(M(v))\,\text{is coherent})\,\text{and}\,
(c(v)\neq c_{old}(v))\,
\text{and}\, ({\mathbf D}_{M(v)}\in \gfam)\, \text{and}\, 
(c(v)\leq\tau({\mathbf D}_{M(v)})).$
}
\label{algo-general}
\end{algorithm}

We consider a family of graphs that has either the property of
Prop.~\ref{NC-LV} or Prop.~\ref{NC-MC}. So we consider such a function $\tau$ and describe algorithm $\mk_{\tau}$ in Algorithm~\ref{algo-general}. 
We recall that the first rule ${\mathbf
  I}$ can be applied by a node $v$ on wake-up only if it has not
received any message: it takes the number $1$, updates its mailbox and
informs its neighbors. The second rule ${\mathbf R}$ describes the
instructions a node $v$ has to follow when it receives a message
$m$ from a neighbor. It updates its mailbox $M(v)$ and its local view
$N(v)$ according to $m$. Then, if it discovers the existence of
another vertex with the same number and a stronger local view, it
takes a new number greater than all known numbers. 
If its mailbox has not changed, it updates
$A(v)$ and possibly increases $c(v)$ by looking at the counters from the neighbors.
Finally, if $M(v)$ or $c(v)$ has been modified, it informs
its neighbors.

 Using the information stored in its mailbox, each node
$v$ will be able to reconstruct a labeled digraph $\bD$ such that
$(\Dir{\bG},\delta)$ locally looks like $\bD$ up to distance $c(v)$.

\textbf{Properties of  Algorithm $\mk_{\tau}.$}
We consider a graph $\bG$ with a port numbering $\delta$ and an
execution $\rho$ of Algorithm $\mk_{\tau}$ on
$(\bG,\delta)$.  
For each vertex $v \in V(G)$, we note $(\lambda_i(v),\overline{b}_i(v),n_i(v),
N_i(v),M_i(v),$ $ c_i(v), A_i(v))$ the state of $v$ after the $i$th
computation step of $\rho$.

An interesting corollary of Proposition~\ref{prop-maz} is:
 there exists a step $i_0$ such that
after this step for any $v$, the values of $\lambda(v),n(v), N(v)$ and
$M(v)$ are not modified any more.

\begin{proposition}\label{prop-maz-a}
Consider a vertex $v$ and a step $i$.

 If $M_{i}(v) = M_{i+1}(v)$ and if $v$ is active at step $i+1$, then
$c_{i}(v) \leq c_{i+1}(v) \leq c_{i}(v) + 1$ and $c_{i+1}(v) \geq \min\{a
\mid\exists (q,a)\in A_{i+1}(v)\}$ if $\exists (q,a) \in
A_{i+1}(v).$

 If $c_{i+1}(v) \geq 1$,
for each $w \in N_{G}(v)$, there exists a step
$j \leq i$ such that  $c_j(w) \geq
c_{i+1}(v) - 1$ and $M_j(w) = M_{i+1}(v)$.
\end{proposition}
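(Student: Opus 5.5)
The plan is to prove both parts by direct inspection of rule $\mathbf R$ of Algorithm $\mk_{\tau}$, using induction on the step $i$ for the second part. Rule $\mathbf I$ only happens at wake-up and sets $c=-1$. Rule $\mathbf C$ changes neither $c$ nor $A$, nor the mailbox modulo bit sequences (mailbox comparisons ignore the $\overline{b}$ components). So only $\mathbf R$ needs to be analysed.

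\textbf{First part.} Assume $M_i(v)=M_{i+1}(v)$ and that $v$ is active at step $i+1$ through rule $\mathbf R$. Since the mailbox is unchanged, the reset branch ($c(v):=-1$, $A(v)$ reset) is not executed. The only remaining instruction touching $c$ is the single conditional increment $c(v):=c(v)+1$. This immediately gives $c_i(v)\le c_{i+1}(v)\le c_i(v)+1$.

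For the lower bound $c_{i+1}(v)\ge\min\{a\mid (q,a)\in A_{i+1}(v)\}$, I would argue by induction. Initially $c=-1$ and every entry of $A$ is $-1$, so the bound holds. After a reset the same is true. Entries of $A(v)$ only change when the corresponding neighbour's counter value is received. I would then need to check that $c$ is incremented whenever it lags below the minimum. The guard $\forall (q',a')\in A(v_0),\ c(v_0)\le a'$ ensures that $c$ never jumps above the minimum plus one. The intended invariant is that $c(v)$ equals roughly the minimum of $A(v)$ plus one, as in SSP; with the $\mathbf{QC}$ guard this degrades to $c(v)\ge \min A(v)$ or $c(v)=-1$ with the minimum $-1$.

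\textbf{Main obstacle.} The $\mathbf{QC}$ condition may block the increment, and then the claimed inequality $c_{i+1}(v)\ge\min A_{i+1}(v)$ might fail once neighbour counters grow. This is the delicate point. I would try to show that neighbours' counters cannot exceed $c(v)+1$: the second part gives $c(w)\le c(v)+1$ for neighbours sharing the same mailbox. Hence any entry $a$ stored in $A(v)$ was sent by some $w$ whose own increment required $c(w)-1\le c(v)$. This would yield the bound, possibly in the form $\min A\le c(v)+1$ combined with monotonicity.

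\textbf{Second part.} Suppose $c_{i+1}(v)\ge1$. Let $i'\le i+1$ be the last step at which $c(v)$ was incremented to the value $c_{i+1}(v)$. Between a reset and step $i+1$, the mailbox $M(v)$ is constant; otherwise $c$ would have been reset to $-1$ and could not be $\ge 1$ without later increments, and we would restart the argument from the last reset. At step $i'$ the guard gives $c_{i'-1}(v)\le a'$ for every $(q',a')\in A_{i'}(v)$, so each entry of $A(v)$ is at least $c_{i+1}(v)-1$. Moreover no entry is $-1$, since $c_{i+1}(v)-1\ge0$. Each entry $(q,a)$ for the port $q$ leading to a neighbour $w$ was written by rule $\mathbf R$ upon receiving a message from $w$ under the condition $M(v_0)=M_1$. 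That message was sent at some step $j\le i$, and at that step $c_j(w)=a\ge c_{i+1}(v)-1$ and $M_j(w)=M_1=M_{i+1}(v)$. Since resets of $A$ coincide with mailbox changes, that entry was written after the last change of $M(v)$, which ensures the mailboxes agree. This completes the argument.
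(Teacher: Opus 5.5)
Your proof of $c_i(v)\le c_{i+1}(v)\le c_i(v)+1$ is sound, and so is your argument for the second part. That argument reads off the guard at the step of the last increment and observes that a non-negative entry of $A(v)$ can only have been written after the last reset, by a message carrying the current mailbox. It is essentially the paper's argument, stated a bit more carefully.

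The gap is the lower bound $c_{i+1}(v)\ge\min\{a\mid (q,a)\in A_{i+1}(v)\}$. Your guard observation bounds $c(v)$ from \emph{above} by $\min A(v)+1$, which is the wrong direction. Your ``main obstacle'' argument, even granting the second part for neighbours at earlier steps, yields only $\min A(v)\le c(v)+1$, i.e.\ $c(v)\ge\min A(v)-1$. That is off by one, and monotonicity of $c$ cannot repair it. The case $c(v)=\min A(v)-1$ is exactly the one where you must show that the increment really fires, and you announce this step but never carry it out.

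The paper closes this by induction, with a case analysis on the message (mailbox $M_1$, counter $a_1$) received on port $q_1$ at step $i+1$. The substantive case is $M_1=M_i(v)=M_{i+1}(v)$, $(q_1,a_1)\notin A_i(v)$ and $a_1=\min A_{i+1}(v)>0$.

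\begin{itemize}
\item The neighbour $w$ behind $q_1$ sends a message at every change of its counter. It raises its counter by at most one per step while its mailbox is fixed, and it never returns to an earlier mailbox.
\item With FIFO channels, the previous message on $q_1$ therefore carried mailbox $M_1$ and counter $a_1-1$. Hence $(q_1,a_1-1)\in A_i(v)$, $\min A_i(v)=a_1-1$, and by induction $c_i(v)\ge a_1-1$.
\item If $c_i(v)=a_1-1$, every entry of $A_{i+1}(v)$ is $\ge a_1>c_i(v)$, so the $A$-guard holds.
\end{itemize}

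The paper takes the $\mathbf{QC}$ part of the guard for granted, and this is the point you correctly flagged but did not resolve. Its substantive conditions are coherence of $S(M(v))$, $\mathbf{D}_{M(v)}\in\mathcal{F}$, and $c(v)\le\tau(\mathbf{D}_{M(v)})$. These depend only on the pair $(M(v),c(v))$. When $w$ incremented its counter to $a_1$, it passed exactly this test with mailbox $M_1=M(v)$ and counter $a_1-1$. So the test cannot block $v$, and $c_{i+1}(v)=a_1$. The case $a_1=0$, $c_i(v)=-1$ is handled the same way.

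Bounding neighbours' counters from above, as you propose, does not substitute for this step.
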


Though the proof uses the same increasing states
argument as \cite{CGMelection}, the proof of Prop.~\ref{prop-maz} is
not a straightforward extension of \cite{CGMelection}. The random bits
sequence is only revealed in a progressive way, the new order on labelings
has to be defined in such a way that 
its dynamic interplay with the
Mazurkiewicz algorithm does not derail the base algorithm.
The key insight is to see (and check carefully) that, using the alphabetical ordering on binary sequences, past labels of a node are always weaker than its current labeling.
The proofs of
Prop.~\ref{prop-maz-a} and Prop.~\ref{prop-qcov-rec} are rather
straightforward extensions of the ones in \cite{CGMelection} since the
generalized SSP algorithm is not a randomized algorithm by itself, so
the bookkeeping is done on $\lambda(v),\overline{b}(v)$ while
considering only $\lambda(v)$ for detecting changes and resetting the
counter $c(v)$. 

\begin{proof}
We prove the proposition by induction on $i.$ Consider a vertex $v$
that modifies its state at step $i+1.$

Initially, if $v$ has applied rule $\mathbf I$ then it is easy to see that the claims hold.

If $v$ has applied rule $\mathbf C$, then the claim holds by induction.

Suppose now that $v$ has applied rule $\mathbf R$ after receiving the 
message $m_1=<(n_1,l_1,b_1,M_1,a_1),p_1>$ via port $q_1$.
Due to the algorithm, we have: $M_i(v)\subseteq M_{i+1}(v).$

If $M_i(v)=M_{i+1}(v)$ and $c_i(v)\neq c_{i+1}(v)$ then $c_{i+1}(v)=c_i(v)+1$.

Let $min_A=min\{a | \exists (q,a)\in A_{i+1}(v)\}.$ 

If $min_A\neq a_1$,
then by induction $c_{i+1}\geq min_A.$ 
If $M_{i+1}(v)\neq M_i(v)$ then $c_{i+1}(v)\geq -1 \geq min_A.$ 
Suppose  that $M_{i+1}(v)=M_i(v).$ If $(q_1,a_1)\in A_i(v)$ then 
$A_{i+1}(v)=A_i(v)$ and by induction $c_{i+1}(v)\geq min_A.$ 
If $M_1\neq M_{i+1}(v)$ then $A_{i+1}(v)=A_i(v)$ and $c_{i+1}(v)\geq min_A.$

Suppose now that
$M_1=M_{i+1}(v).$ If $a_1=0$ then $v$ increases $c(v)$ if it is equal to $-1$ since the $QC$-related inequality is trivially satisfied.
Thus $c_{i+1}(v)\geq 0$ and consequently $c_{i+1}(v)\geq min_A.$

Otherwise, we may assume that $min_A=a_1>0,$ $M_{i+1}(v)=M_i(v)=M_1$
and $(q_1,a_1)\notin A_i(v).$ Let $m_2=<(n_2,\ell_2,\beta_2,M_2,a_2),p_1>$ be the previous
message received via port $q_1.$ Since
communication channels are FIFO, $m_2$ has been sent before $m_1$. 
Since $a_1>0$, $M_2=M_1$ and thus, by induction, $a_2=a_1-1.$ Let $j\leq i$ be
the step where $v$ gets $m_2.$
Since $M_2\subseteq M_j(v)\subseteq M_i(v)=M_1=M_2$,
$M_j(v)=M_2.$ Consequently, $(q,a_2)\in A_i(v).$ Since $a_1=min_A,$
$a_2=a_1-1=min\{a | \exists (q,a)\in A_i(v)\}$. If $c_i(v)\geq a_1$ then
$c_{i+1}(v)\geq c_i(v)\geq min_A.$ Otherwise,
by induction, $c_i(v)=a_2=a_1-1,$ and, since $c_i(v)<a_1$ 
and $c_i(v)\leq min_A,$
$v$ increases $c_i(v).$ Thus $c_{i+1}(v)=1+c_i(v)=a_1\geq min_A.$

Suppose that $c_{i+1}(v)\geq 1$ and consider a vertex $w\in N_G(v).$ From that is above, there 
exists $(\delta_v(w),a)\in A_{i+1}(v)$ with $a\geq c_{i+1}(v)-1\geq 0$ and
$v$ gets  a message $m=<(n,\ell,\beta,M,a),\delta_w(v)>$ from $w$ at a step
$i'\leq i+1.$ Let $j<i+1$ be the step where $w$ sent this message. Since
$a\geq 0$, $M_{i'}(v)=M_{i+1}(v)=M_j(v)$ and 
$c_j(w)=a\geq c_{i+1}(v)-1.$
\end{proof}

\subsection{Quasi-Covering Computation}

We first prove a proposition that enables to present another
definition of quasi-coverings to prove that nodes can compute quasi-coverings of increasing radius.
The proof of this proposition needs the definition of view we give now.

\begin{definition} 
  Consider a symmetric digraph $\bD = (D,\lambda,b) \in \ldigraph$ and a
  vertex $v \in V(D)$.  The \emph{view} of $v$ in $\bD$ is an infinite
  rooted tree denoted by $\bT_{\bD}(v) = (T_D(v),\lambda',b')$ and
  defined as follows:
  \begin{compactitem}
    \item $V(T_D(v))$ is the set of non-stuttering paths $\pi = a_1,
      \dots, a_p$ in $\bD$ with $s(a_1) = v$. For each path $\pi =
      a_1, \dots, a_p$, $\lambda'(\pi) = \lambda(t(a_p)),$ and
      $b'(\pi)=b(t(a_p))$.
    \item for each $\pi, \pi' \in V(T_D(v))$, there are two arcs
      $c_{\pi,\pi'}, c_{\pi',\pi} \in A(T_D(v))$ such that
      $Sym(c_{\pi,\pi'})= c_{\pi',\pi}$ if and only if $\pi' = \pi,
      a$. In this case, $\lambda'(c_{\pi,\pi'}) = \lambda(a)$ and
      $\lambda'(c_{\pi',\pi}) = \lambda(Sym(a))$. Similarly for $b'$ and $b$.
    \item the root of $T_D(v)$ is the vertex corresponding to the
      empty path and its label is $(\lambda(v),b(v))$. 
  \end{compactitem}
\end{definition}

\begin{remark}
For all vertices, $u$ and $v:$ $T_D(u)$ is isomorphic to $T_D(v)$.
We denote by $T_D$ this graph defined up to an isomorphism. It is the
universal covering of $D.$ It is useful to provide examples of
quasi-coverings.
\end{remark}

Consider the view $\bT_\bD(v)$ of a vertex $v$ in a digraph $\bD \in
\ldigraph$ and an arc $a$ such that $s(a) = v$. We define $\bT_{\bD -
  a}(v)$ be the infinite tree obtained from $\bT_\bD(v)$ by removing
the sub-tree rooted in the vertex corresponding to the path $a$.

\begin{proposition}\label{prop-def-qcov}
Given two symmetric labeled digraphs $\bD_0, \bD_1$, an integer $r$,
a vertex $v_1 \in V(D_1)$ and a homomorphism $\gamma$ from
$\bB_{\bD_1}(v_1,r)$ to $\bD_0$, $\bD_1$ is a \emph{quasi-covering} of
$\bD_0$ of center $v_1$ and of radius $r$ via $\gamma$ if and only if
the following holds:

\begin{enumerate}[(i)]
\item for each arc $a \in A(B_{D_1}(v_1,r))$, $\gamma(Sym(a)) =
  Sym(\gamma(a))$,
\item for any $v \in \bB_{\bD_1}(v_1,r)$, $\gamma$ induces an injection
  between the incoming (resp.outgoing) arcs of $v$ and the incoming
  (resp. outgoing) arcs of $\gamma(v)$,
\item for any $v \in \bB_{\bD_1}(v_1,r-1)$, $\gamma$ induces a surjection
  between the incoming (resp.outgoing) arcs of $v$ and the incoming
  (resp. outgoing) arcs of $\gamma(v)$.
\end{enumerate}
\end{proposition}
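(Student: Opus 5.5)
The plan is to prove the two implications separately, working throughout with the labeled digraphs $\bD_0,\bD_1$ (the $B-$labels being carried along as ordinary labels, so they play no special role).

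\textbf{Necessity.} Assume $\bD_1$ is a quasi-covering of $\bD_0$ of center $v_1$ and radius $r$ via $\gamma$, with $\bD_2$, $\varphi$, $v_2$ and $\psi$ as in Definition~\ref{def-qc}. Since $\gamma=\varphi\circ\psi$ on the ball, each local property of $\gamma$ reduces to the corresponding property of $\varphi$ and $\psi$.
\begin{enumerate}[(i)]
\item Symmetry: $\varphi$ commutes with $Sym$ because it is a symmetric covering. The isomorphism $\psi$ commutes with $Sym$ because $Sym(a)$ is the unique arc with swapped endpoints and swapped port labels, and both $\psi$ and the ball preserve this.
\item Injectivity at $v$ in the ball: the arcs of $\bB_{\bD_1}(v_1,r)$ incident to $v$ are sent injectively by $\psi$ to arcs at $\psi(v)$, and $\varphi$ is a local bijection at $\psi(v)$.
\item Surjectivity at $v\in\bB_{\bD_1}(v_1,r-1)$: by the definition of balls, every arc of $D_1$ incident to $v$ lies in $\bB_{\bD_1}(v_1,r)$. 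Likewise every arc of $D_2$ incident to $\psi(v)$ lies in $\bB_{\bD_2}(v_2,r)$, so $\psi$ is a bijection between the full stars of $v$ and of $\psi(v)$. Composing with the local bijection given by $\varphi$ yields surjectivity.
\end{enumerate}

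\textbf{Sufficiency.} This is the real work: from (i)--(iii) we must construct $\bD_2$, $\varphi$, $v_2$ and $\psi$. I would build $\bD_2$ by completing the ball into a covering using views. Take the ball $\bB=\bB_{\bD_1}(v_1,r)$. At every vertex $v$ of $\bB$ whose star is incomplete (necessarily at distance exactly $r$, by (iii)), and for every arc $a'$ of $\bD_0$ at $\gamma(v)$ not in the image of $v$'s arcs, attach a copy of the branch of the universal cover $\bT_{\bD_0}$ hanging off $a'$. More precisely, attach the subtree of $\bT_{\bD_0}(\gamma(v))$ rooted at the path $a'$, which is the complement of $\bT_{\bD_0 - a'}$, together with the arc pair realizing $a'$ and $Sym(a')$.

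The map $\varphi$ is $\gamma$ on $\bB$, and on each attached tree it is the natural projection of paths to their endpoints/last arcs. Hypothesis (ii) guarantees that the added arcs never create a second preimage of an arc at $v$. The tree projection is a local bijection everywhere except at the attachment root, where (ii) and the choice of the missing arcs make it bijective. Hypothesis (i) guarantees that $Sym$ is respected on arcs of $\bB$, and it is respected on the trees by construction. Hence $\varphi$ is a symmetric covering of $\bD_0$, possibly infinite. Finally $\psi$ is the inclusion and $v_2=v_1$. Since no vertex is added within distance $r-1$ of $v_1$, and the added arcs are not incident to vertices at distance $\leq r-1$, we have $\bB_{\bD_2}(v_1,r)=\bB$.

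\textbf{Main obstacle.} I expect the hardest part to be whether $\bD_2$ must be finite, since Definition~\ref{def-qc} presumably intends finite digraphs. If finiteness is needed, I would instead take $|V(\bB)|$ copies of $\bD_0$ (enough sheets) and perform a cut-and-reglue. For each missing arc at a boundary vertex $v$, match it with the arc of an unused copy of $\bD_0$ having the corresponding endpoint, then rewire by swapping arc endpoints between sheets, as in the standard construction of finite coverings with prescribed partial fibers. The delicate points are checking that the rewiring can be done without disturbing arcs inside the ball, and that the pairing of $a$ with $Sym(a)$ survives the swap.
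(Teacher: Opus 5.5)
Your proof is correct and takes the paper's route: necessity by factoring $\gamma=\varphi\circ\psi$ (which the paper simply calls easy; you implicitly, and rightly, take $\psi(v_1)=v_2$), and sufficiency by grafting a branch of the universal cover at each boundary vertex for each missing arc. Your subtree of $\bT_{\bD_0}(\gamma(v))$ rooted at the path $a'$ is exactly the paper's $\bT_{\bD_0-a_0}(s(a_0))$ with $a_0=Sym(a')$, and the paper explicitly accepts the resulting infinite $\bD_2$ as the covering, so your finite cut-and-reglue fallback is not needed.
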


\begin{proof}    
    If $\bD_1$ is a quasi-covering of $\bD_0$ of center $v_1$ and of
    radius $r$ via $\gamma$, then it is easy to see that $\gamma$
    satisfies these properties.

     Reversely, we construct an infinite covering $\bD_2 =
     (D_2,\lambda_2,b_2)$ of $\bD_0 = (D_0,\lambda_0,b_0)$ as follows. First we
     take a copy $\bB_1$ of $\bB_{\bD_1}(v_1,r)$ and we note $\delta$
     the isomorphism from $\bB_{\bD_1}(v_1,r)$ to $\bB_1$.
     Then, consider a vertex $v$ such that $\dist_{B_1}(v_1,v)=r$ and
     an arc $a_0 \in A(D_0)$ such that $v \in \gamma^{-1}(t(a_0))$. If
     there is no arc $a \in A(B_1)$ such that $t(a) = v$ and $\gamma(a)
     = a_0$, then we add a copy of $\bT_{\bD_0 - a_0}(s(a_0))$ to $\bD_2$ and
     we note $v_2(a_0)$ the root of this tree. We add two arcs $a_2,
     a_2'$ such that $t(a_2) = s(a_2') = v$, $s(a_2) = t(a_2') =
     v_2(a_0)$, $\lambda_2(a_1) = \lambda_0(a_0)$, $\lambda_2(a_2) =
     \lambda_0(Sym(a_0))$ and $Sym(a_2) = a_2'$. Similarly for $b'$ and $b$.

     The digraph $\bD_2$ is the digraph obtained once these
     constructions have been done for all $v \in V(B_1)$ such that
     $\dist_{B_1}(v,v_1)=r$. It is easy to see that $\bD_2$ is a
     symmetric covering of $\bD_0$ via some homomorphism $\varphi$ and
     that for any $v \in V(B_{\bD_1}(v,r))$, $\gamma(v) =
     \varphi(\delta(v))$.
\end{proof}

  The following proposition shows that $\bD_{M(v)}$ has some precise similarity
  with $(\Dir{\bG},\delta)$. 

  \begin{proposition}\label{prop-qcov-rec}
     If $S(M(v))$ is coherent, $(\Dir{\bG},\delta)$ is a
     quasi-covering of $\bD_{M(v)}$ of radius $c(v)$ and center $v$.
  \end{proposition}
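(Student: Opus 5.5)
We use the characterization of quasi-coverings in Proposition~\ref{prop-def-qcov}: it suffices to build a homomorphism $\gamma$ from $\bB_{\Dir{\bG}}(v,c(v))$ to $\bD_{M(v)}$ satisfying (i)--(iii) with $r=c(v)$. The natural choice is $\gamma(w)=n(w)$, the current number of $w$, and, for an arc from $u$ to $w$ with port labels $(p,q)$, $\gamma(a)=a_{n(w),n(u),p,q}$. The main work is to justify that, inside the ball, every node $w$ "agrees" with $v$: its state is recorded as a maximal element of $M(v)$, and its neighbours' numbers are consistent with its local view. This justification is where the counter $c$ is used.

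\textbf{Step 1: propagation of the mailbox along the counter.} We prove by induction on $k$ the following claim: if $c_i(v)=k\geq 0$, then every $w$ with $\dist(v,w)\le k$ has, at some step $j\le i$, $M_j(w)=M_i(v)$ and $c_j(w)\ge k-\dist(v,w)$. The induction uses the second part of Proposition~\ref{prop-maz-a} applied repeatedly along a shortest path from $v$ to $w$. Along the path, the counters decrease by at most one per edge, and the mailboxes are equal at the times identified.

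\textbf{Step 2: from mailbox equality to the local structure.} Fix $w$ at distance at most $c(v)$ and the step $j$ where $M_j(w)=M(v)$. By Proposition~\ref{prop-maz}, $(n_j(w),\lambda(w),\overline{b}_j(w),N_j(w))$ is maximal in $M_j(w)=M(v)$, so it lies in $S(M(v))$. Hence $\gamma(w)=n_j(w)$ is a vertex of $\bD_{M(v)}$ with the right $\lambda$-label. If moreover $\dist(v,w)\le c(v)-1$, then $c_j(w)\ge 1$. We then apply Proposition~\ref{prop-maz-a} to $w$: each neighbour $u$ had, at an earlier step, the same mailbox. Consequently the entry of $N_j(w)$ for the port towards $u$ carries $u$'s final number in that mailbox. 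Otherwise $u$'s newer state would have entered $M(u)$ and, by the FIFO channels, $M(w)$ as well, contradicting equality of the mailboxes.

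Step 2 gives that arcs at $w$ map bijectively onto the arcs of $\gamma(w)$ in $\bD_{M(v)}$. This holds because $N(w)$ has exactly one entry per port $[1,\deg(w)]$, and the arcs of $\bD_{M(v)}$ at vertex $n(w)$ are defined from exactly that local view. This yields surjectivity (iii) on $\bB(v,c(v)-1)$ and injectivity (ii) on the whole ball. Condition (i), compatibility with $Sym$, follows from coherence of $S(M(v))$: if $(n_2,\ldots,p,q)\in N_1$, then $(n_1,\ldots,q,p)\in N_2$.

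\textbf{Main obstacle.} The delicate point is the timing mismatch: the steps $j$ differ for different $w$, and the random-bit sequences keep growing through rule $\mathbf C$ even when mailboxes are "equal". Equality is taken ignoring bit suffixes, while coherence only requires prefixes of $\overline{b}$. We must therefore check that $\gamma$ preserves the $\lambda$-labels and the port labels, and that the bit prefixes stay consistent with the unique maximal element for each number. We handle this by noting that, once mailboxes coincide modulo bits, the maximal element for number $n(w)$ in $S(M(v))$ is unique. We also note that rule $\mathbf C$ only extends $\overline{b}$, so it changes neither the number nor the view. It therefore does not affect $\gamma$ as a map of $\lambda$-labeled digraphs; the $B$-labels are matched only up to prefix, exactly as in the definition of coherence.
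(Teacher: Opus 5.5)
Your overall plan is the paper's own proof. Step~1 is its first claim, obtained by iterating Proposition~\ref{prop-maz-a} along a shortest path. You then take $\gamma(w)=n_{j_w}(w)$ at a step with $M_{j_w}(w)=M(v)$ (not the current number of $w$, as your opening paragraph says), use maximality to place $w$'s state in $S(M(v))$, and finish with Proposition~\ref{prop-def-qcov}.

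\textbf{The gap is in Step~2.} For $\dist(v,w)\le c(v)-1$ you must show that the entry of $N_{j_w}(w)$ at the port towards a neighbour $u$ carries $\gamma(u)=n_{j_u}(u)$. This is exactly what makes the arc $u\to w$ map to an arc of $\bD_{M(v)}$ with source $\gamma(u)$, and your reason for it does not work. The dangerous case is not that $u$ changed state after $w$ recorded it. It is that $w$'s entry is \emph{stale}: the last message $w$ received from $u$ before $j_w$ was sent before $M(u)$ reached $M(v)$, so it carries an older number of $u$. In that case $u$'s later state, with number $n_{j_u}(u)$, lies in $M_{j_u}(u)=M(v)=M_{j_w}(w)$, having reached $w$ through other neighbours. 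So there is no contradiction with equality of mailboxes, yet the arc of $\bD_{M(v)}$ at $\gamma(w)$ with port $\delta_w(u)$ comes from the wrong vertex. The statement of Proposition~\ref{prop-maz-a} only gives a step $j_u\le j_w$ with $M_{j_u}(u)=M(v)$. It says nothing about a message sent by $u$ at such a step having been \emph{delivered} to $w$ before $j_w$.

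\textbf{The missing ingredient is the one the paper uses.}
\begin{itemize}
\item Since $c_{j_w}(w)\ge 1$, the $A(w)$-entry for the port towards $u$ is at least $0$.
\item Such an entry is written only when $w$ receives a message whose mailbox equals its current one, and it is reset whenever $M(w)$ changes. So $w$ has received from $u$ a message carrying mailbox $M_{j_w}(w)$.
\item By FIFO and monotonicity of mailboxes, the \emph{last} message through that port before $j_w$ also carries mailbox $M_{j_w}(w)=M_{j_u}(u)$, with counter $a\ge 0$.
\item A change of number always adds a fresh tuple to the mailbox, so the number in that message is $n_{j_u}(u)$.
\end{itemize}
This fact lives in the proof of Proposition~\ref{prop-maz-a}, not in its statement, so you must invoke it explicitly.

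\textbf{A smaller point.} Take an arc between an inner $u$ and a $w$ at distance exactly $c(v)$. There only $c_{j_w}(w)\ge 0$ is known, so $w$'s own view may be stale. Your formula $a_{n(w),n(u),p,q}$ reads $w$'s view, so the existence of that arc must come from $u$'s (correct) view together with coherence of $S(M(v))$. Coherence is thus needed here as well, not only for condition~(i).
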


The algorithm $\mk_{\tau}$
 is a combination of
the  algorithm $\mk$  presented
in  Section \ref{algo-mk}
 and a generalization of the algorithm of Szymanski,
Shy and Prywes.
The tentative enumeration algorithm will terminate on any network $(\Dir{\bG},\delta).$ with some strictly positive probability.
In the
execution, each node $v$ reconstructs at step
$i$ a symmetric labeled digraph $\bD_i(v)$ such that $(\Dir{\bG},\delta)$ is a
quasi-covering of $\bD_i(v).$ 
When the enumeration algorithm has terminated
then for each vertex $v,$ after the step $t_v$ for the vertex $v$, 
there exists a step $t'_v$ such that  $c(v)\geq \tau(\bD_{M(v)}).$
This knowledge and the knowledge of the reconstructed graph, $\bD_{M(v)},$ 
 implies that each node knows that the enumeration algorithm has terminated
and if its number is the maximal or not among the numbers appearing in the
graph: each node  can decide if it is elected or not.

\begin{proof}%
    Consider a step $i$ and a node $v$ such that $S(M_i(v))$ is
    coherent. If $c_i(v) = 0$, then we are done. Suppose now that
    $c_i(v) \geq 1$. From Proposition~\ref{prop-maz-a}, for each $w \in
    V(Dir(G))$ such that $\dist_{Dir(G)}(v,w) \leq c_i(v)$, there exists a step
    $j_w \leq i$ such that $c_{j_w}(w) \geq c_i(v) - \dist_{Dir(G)}(v,w)$ and
    $M_{j_w}(w) = M_i(v)$.

    Thus, for each $w \in V(B_{{Dir(\bG)}}(v,c_i(v)))$, $j_w$ is defined and
    $c_{j_w}(w) \geq 0$ and for each $w \in V(B_{{Dir(\bG)}}(v,c_i(v) -
    1))$, $c_{j_w}(w) \geq 1$.  For each $w \in
    V(B_{{Dir(\bG)}}(v,c_i(v)))$, $(n_{j_w}(w),\lambda(v),b(v),N_{j_w}(w)) \in
    S(M_{j_w}(w)) = S(M_i(v))$ and thus $\deg_{G}(w) =
    \deg_{D_{M_i(v)}}(n_{j_w}(w))$.  Thus, we can define
    $\gamma(w) = n_{j_w}(w) \in V(D_{M_i(v)})$ and we have
    $\lambda_{M_i(v)}(\gamma(w)) = \lambda(w)$. %

    For each arc $a \in A(B_{{Dir(\bG)}}(v,c_i(v)))$, let $w = t(a), w' =
    s(a)$ (resp. $w=s(a), w'=t(a)$) and suppose without loss of
    generality that $\dist_G(v,w) \leq c_i(v) - 1$. Let $m =
    <(n,\ell,\overline{b},M,a),p>$ be the last message received by $w$ through port
    $\delta_w(w')$ before step $j_w$. Since $c_{j_w}(w) \geq 1$, $M =
    M_{j_w}(w) = M_{j_{w'}}(w')$, $n = n_{j_{w'}}(w')$, $p =
    \delta_{w'}(w)$ and $a \geq 0$.  Thus, we can define
    $\gamma(a) = c_{n,n',p,q}$ (resp. $\gamma(a) = c_{n',n,q,p}$)
    where $n = n_{j_w}(w)$, $n' = n_{j_{w'}}(w')$, $p= \delta_{w'}(w)$
    and $q = \delta_w(w')$. It is easy to see that
    $\lambda_{M_i(v)}(\gamma(a)) = \lambda(a) = (p,q)$
    (resp. $\lambda_{M_i(v)}(\gamma(a)) = \lambda(a) = (q,p)$) and
    that $Sym(\gamma(a))=\gamma(Sym(a))$.
    Consequently, $\gamma$ is a homomorphism from
    $\bB_{{Dir(\bG)}}(v,c_i(v))$ to $\bD_{M_i(v)}$. 

    For each $w \in V(B_{{Dir(\bG)}}(v,c_i(v)))$, for all arcs $a, a'$ such
    that $s(a) = s(a') = w$ (resp. $t(a) = t(a') = w$), $\lambda(a)
    \neq \lambda(a')$ or $\overline{b}(a) \neq \overline{b}(a')$ and thus, $\gamma(a) \neq \gamma(a')$.
    For each $w \in V(B_{{Dir(\bG)}}(v,c_i(v) - 1))$, since $\deg_{G}(w) =
    |N_{j_w}(w)| = \deg_{D_{M_{i}(v)}}(n_{j_w}(w))$, $\gamma$ induces
    a bijection between the incoming (resp. outgoing) arcs of $w$ in
    ${Dir(\bG)}$ and the incoming (resp. outgoing) arcs of $n_{j_w}(w)$ in
    $\bD_{M_i(v)}$.
    From Proposition~\ref{prop-def-qcov}, ${Dir(\bG)}$ is a quasi-covering
    of $\bD_{M_i(v)}$ of center $v$ and of radius $c_i(v)$ via
    $\gamma$.

  \end{proof}

  \begin{remark}
    As can be seen in the proof, the value $\gamma(w)$ does not depend
    on the actual $j_w$. The quasi-covering $\gamma$ is obtained from
    $n$, in the sense that the value of $\gamma(w)$ at $w\in
    B_{Dir(G)}(v,c(v))$ is equal to $n(w)$ at the time-step where
    $M(w)=M(v).$ 
  \end{remark}

\end{document}